\definecolor{newcolor}{rgb}{0.5,0,1}
\newcommand{\red}[1]{{\color{red}#1}}
\newcolumntype{I}{!{\vrule width 2pt}}
\newtheorem{Theorem}{Theorem}
\newtheorem{Lemma}{Lemma}
\theoremstyle{remark}
\newtheorem{Remark}{Remark}
\DeclareMathAlphabet{\mathpzc}{OT1}{pzc}{m}{it}
\definecolor{newcolor}{rgb}{0.5,0,1}
\def\BibTeX{{\rm B\kern-.05em{\sc i\kern-.025em b}\kern-.08em
    T\kern-.1667em\lower.7ex\hbox{E}\kern-.125emX}}
\let\emptyset\varnothing
\begin{document}

\title{
Generalized Lagrange Coded Computing: A Flexible Computation-Communication Tradeoff for Resilient, Secure, and Private Computation
}

\author{Jinbao Zhu, Hengxuan Tang, Songze Li, and Yijia Chang
\thanks{
A part of this work was presented at the 2022 IEEE International Symposium on Information Theory \cite{zhu2022generalized}. 
This work was supported in part by the National Natural Science Foundation of China (NSFC) under Grant 62106057, and was supported in part by the Fundamental Research Funds for the Central Universities under Grant 2242024k30059 and Grant 2682024CX030.

Jinbao Zhu and Hengxuan Tang 
are with the Information Coding and Transmission (ICT) Key Laboratory of Sichuan Province, Southwest Jiaotong University, Chengdu 611756, China
(e-mail: jinbaozhu@swjtu.edu.cn; hxuantang@my.swjtu.edu.cn).

Songze Li is with the School of Cyber Science and Engineering, Southeast University, Nanjing 210003, China, and with the Engineering Research Center of Blockchain Application, Supervision and Management (Southeast University), Ministry of Education (e-mail: songzeli@seu.edu.cn).

Yijia Chang is with the Thrust of Internet of Things, The Hong Kong University of Science and Technology (Guangzhou), Guangzhou 510006, China (e-mail: ychang847@connect.hkust-gz.edu.cn).
}}

\maketitle

\begin{abstract}
We consider the problem of evaluating arbitrary multivariate polynomials over a massive dataset containing multiple inputs, on a distributed computing system with a master node and multiple worker nodes. Generalized Lagrange Coded Computing (GLCC) codes are proposed to simultaneously provide resiliency against stragglers who do not return computation results in time, security against adversarial workers who deliberately modify results for their benefit, and information-theoretic privacy of the dataset amidst possible collusion of workers. GLCC codes are constructed by first partitioning the dataset into multiple groups, then encoding the dataset using carefully designed interpolating polynomials, and sharing multiple encoded data points to each worker, such that interference computation results across groups can be eliminated at the master. Particularly, GLCC codes include the state-of-the-art Lagrange Coded Computing (LCC) codes as a special case, and exhibit a more flexible tradeoff between communication and computation overheads in optimizing system efficiency. 
Furthermore, we apply GLCC to distributed training of machine learning models, and demonstrate that GLCC codes achieve a speedup of up to $2.5\text{--}3.9\times$ over LCC codes in training time, across experiments for training image classifiers on different datasets, model architectures, and straggler patterns.
\end{abstract}

\begin{IEEEkeywords}
Coded distributed computing, Lagrange interpolating polynomial, interference cancellation, straggler mitigation, security, and privacy.
\end{IEEEkeywords}

\section{Introduction}
\IEEEPARstart{A}{s the} era of Big Data advances, distributed computing has emerged as a natural approach to speed up computationally intensive operations by dividing and outsourcing the computation among many worker nodes that operate in parallel.  However, scaling out the computation across distributed workers leads to several fundamental challenges, including additional communication overhead compared to centralized processing, and slow or delay-prone worker nodes that can prolong computation execution time, known as straggler effect~\cite{Tail1,Tail3}. Furthermore, distributed computing systems are also much more susceptible to adversarial workers that deliberately modify the computations for their benefit, and raise serious privacy concerns when processing sensitive raw data in distributed worker nodes. Therefore, designing computation and communication efficient protocols that are resilient against the straggler effect and secure against adversarial workers, while providing a privacy guarantee is of vital importance for distributed computing applications.

Coded distributed computing is an emerging research direction that develops information-theoretic methods
to alleviate the straggler effect, provide robustness against adversarial workers, and protect data privacy, via carefully adopting the idea of error control codes to inject computation redundancy across distributed workers~\cite{li2020coded}.
Coding for distributed computation was earlier considered in \cite{Lee} for some linear function computations (e.g., computing matrix-vector multiplication \cite{li2016unified,mallick2020rateless} or convolution of two input vectors \cite{dutta2017coded}). Subsequently, polynomial codes \cite{Polynomialcode,EPcode} and MatDot codes \cite{MatDotcode} were introduced for mitigating straggler effect and providing privacy guarantee \cite{Tandonsecurecode,Kakar_secure_code,d2020gasp,ZhuSDMM,zhu2020secure,zhu2022systematic} by leveraging the algebraic structure of polynomial functions, within the context of distributed matrix multiplication. 
In a recent work \cite{jia2021cross}, multilinear map codes were developed to compute the evaluations of a multilinear map function over a batch of data,  minimizing communication and computation overheads while mitigating the effect of stragglers. 
Gradient coding \cite{GradientCodes1,GradientCodes2,wang2019erasurehead,li2018near} is also an interesting technique of straggler mitigation for training complicated models with large data sets in a distributed system.
A fundamental tradeoff between computation and communication overheads was established in \cite{LiMapreduce,li2017scalable} for the generally distributed computing frameworks like MapReduce.

\begin{figure}[htbp]
\centering
    \includegraphics[width=7.8cm]{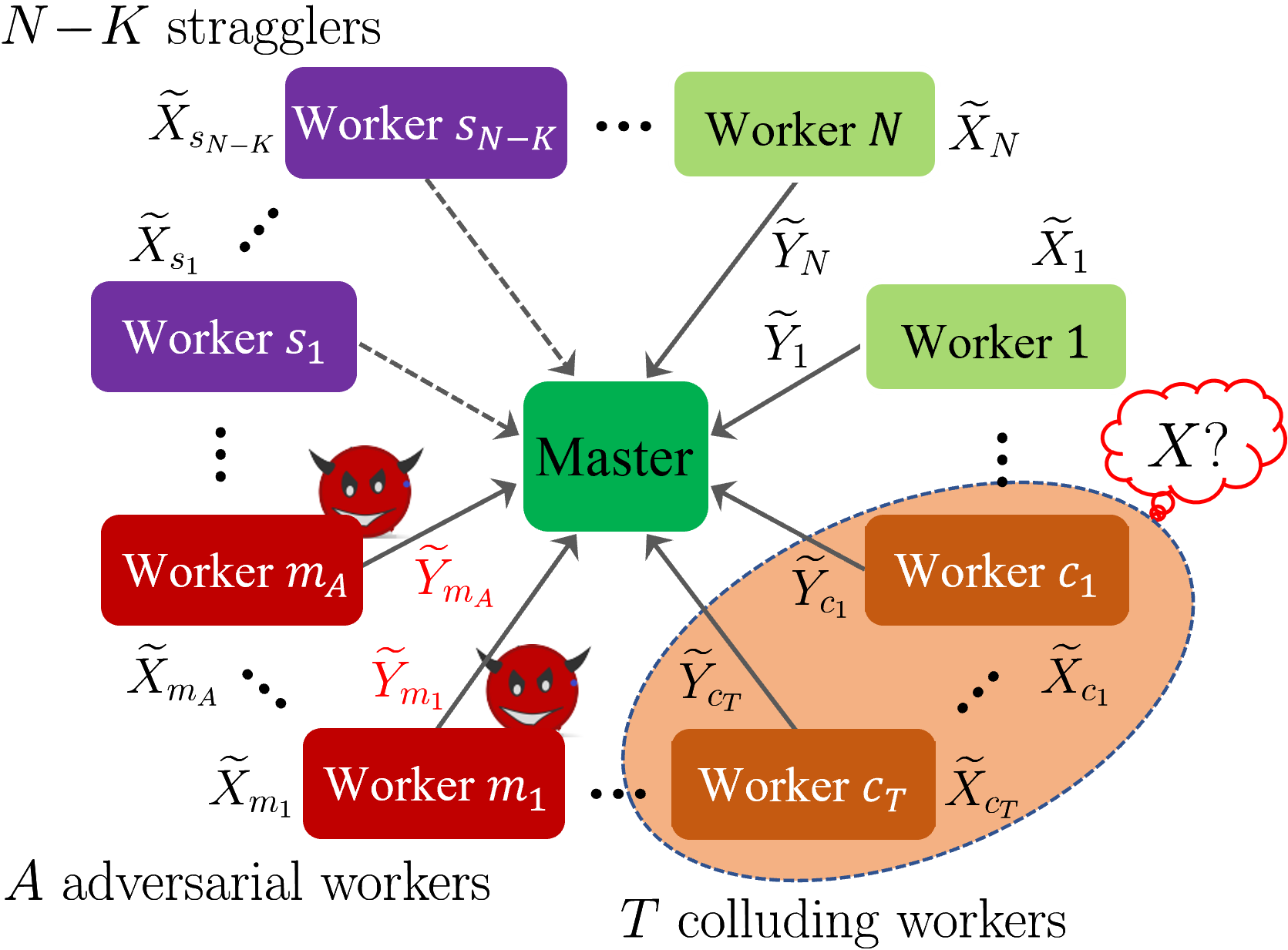}
    \caption{A distributed computing system consisting of a master and $N$ workers for evaluating multivariate polynomials on dataset $X$. Worker $n$ computes a response $\widetilde{Y}_n$ to the master on local data $\widetilde{X}_n$. The master waits for the results from the fastest $K$ workers, $A$ out of whom may be malicious, to recover computation results. Up to $T$ workers may collude to infer about $X$.}
    \label{GLCC}
\end{figure}

This paper considers the distributed computing problem of evaluating arbitrary multivariate polynomial, which was first introduced by Yu \emph{et al.} in \cite{LCC}. 
Specifically, in a distributed computing system including a master node and a set of $N$ worker nodes, the master wishes to compute the evaluation $Y_m=\phi(X_m)$ for each data point $X_m$ in a large dataset $X=(X_1,X_2,\ldots,X_M)$ with the assistance of the $N$ worker nodes, where $\phi$ is an arbitrary multivariate polynomial, as illustrated in Fig. \ref{GLCC}.
It is guaranteed that any up to $T$ colluding workers learn nothing about the dataset $X$ and that the desired polynomial computations must be completed in the presence of $A$ adversarial workers who deliberately modify computation results.
To do that, the master sends an encoded version of $X$ to each worker, who accordingly generates a response according to the encoded data received. Due to the straggler effect, the master only waits for the responses from a subset of the fastest $K$ workers to recover desired evaluations, where the minimum number of successful computing workers that the master needs to wait for is referred to as \emph{recovery threshold}.
Note from references \cite{EPcode,Qian_Yu} that matrix multiplication can be converted into the problem of computing the element-wise products of two batches of sub-matrices based on the concept of bilinear complexity \cite{Strassen,Smirnov}. Moreover, the evaluation of a multivariate polynomial can be also regarded as computing a linear combination of various evaluations of a multilinear map function. Thus, the problem of computing polynomial evaluations over a batch of datasets covers many computations of interest in machine learning, including the aforementioned linear computation \cite{Lee},  matrix multiplication \cite{EPcode,Qian_Yu}, and multilinear map computation \cite{jia2021cross}.

Lagrange Coded Computing (LCC) codes \cite{LCC} were initially proposed for solving the polynomial computation problem, 
but they exhibit low straggler tolerance and require a high recovery threshold.
The goal of this paper is to enhance the straggler tolerance of LCC codes and establish a flexible tradeoff among recovery threshold, communication overhead, and computation overhead to enable optimal designs on system performance. Here, the communication overhead includes the upload cost from the master to the workers and the download cost from the workers to the master, while the computation overhead includes the complexities of generating the encoded data of the dataset at the master, generating the response at each worker, and decoding the desired evaluations at the master.

The main contribution of this paper is a proposed Generalized LCC (GLCC) codes for the polynomial computation problem, which include LCC codes as a special case and establish a more flexible tradeoff among recovery threshold, communication overhead, and computation overhead.
The key components of GLCC codes include 1) partitioning the dataset into \emph{multiple groups}, and encoding the data in each group using Lagrange interpolating polynomial; 2) sharing \emph{multiple evaluations} of each group polynomial with each worker to create computational redundancy; 3) each worker returning multiple computation results to the master, constructed using additional polynomials as interference cancellation coefficients to eliminate interference between groups at the master. Consequently, the computation results at workers can be viewed as evaluations of a composition polynomial, such that the decoding is accomplished by interpolating the polynomial. 

To demonstrate the performance gains over LCC codes, we apply the proposed GLCC codes on distributed training of perceptron neural networks, and empirically evaluate the performance of GLCC codes on simultaneously training multiple single-layer perceptron classifiers. More specifically, in experiments involving various combinations of datasets and straggler scenarios, GLCC codes reduce the training time for the classifiers to reach certain test accuracy by a factor of up to $2.5\text{--}3.9\times$, compared with the LCC codes. Experimental results also demonstrate significant advantages of GLCC codes over LCC codes, in speeding up the model convergence and providing stronger protection on data and model privacy.

\noindent\textbf{Related Work:} 
Since the introduction of LCC codes, there are some works focused on improving the robustness of LCC codes against adversarial workers \cite{soleymani2021list,tang2021verifiable}, applying LCC codes to train machine learning models \cite{so2021codedprivateml,so2020scalable,so2022lightsecagg,shao2022dres}, and implementing LCC codes in analog domains \cite{soleymani2021analog}. More specifically, reference \cite{soleymani2021list} asymptotically improves adversarial tolerance by a factor of two using the folded Reed-Solomon (RS) list decoding ideas in \cite{guruswami2013linear}, but requires the master to perform additional evaluation computations to gain side information about the computation results. Reference \cite{tang2021verifiable} also enhances adversarial tolerance by a factor of two using the concept of verifiable computing to ensure computational integrity. However, it additionally requires the master to compute private verification keys and to verify the correctness of the computation results returned by the workers. The computation overhead of generating verification keys and verifying the computation results depends on the dataset and the multivariate polynomial, and thus is non-negligible.
In recent years, LCC codes have been exploited to train machine learning models in diverse distributed environments while preserving data privacy by quantizing the data into a finite field. These environments include centralized distributed computing systems \cite{so2021codedprivateml}, fully-decentralized distributed systems \cite{so2020scalable}, and federated learning systems \cite{so2022lightsecagg,shao2022dres}.
Unlike these works \cite{soleymani2021list,tang2021verifiable,so2021codedprivateml,so2020scalable,so2022lightsecagg,shao2022dres} that use Shamir's secret sharing \cite{Shamir} to provide information-theoretic privacy guarantees and therefore require computations to be performed in the finite field, reference \cite{soleymani2021analog} introduces analog LCC codes to achieve computations in the analog domain.
Nevertheless, analog LCC codes \cite{soleymani2021analog} do not account for the presence of adversarial workers nor provide perfect privacy guarantees.

In this paper, we focus on enhancing the resiliency of LCC codes against stragglers while providing information-theoretic privacy guarantees for the dataset. We aim to establish a flexible tradeoff among recovery threshold, communication overhead, and computation overhead to optimize system performance. Given the widespread applicability of LCC codes \cite{so2021codedprivateml,so2020scalable,so2022lightsecagg,shao2022dres}, adopting the ideas from the works \cite{soleymani2021list,tang2021verifiable,soleymani2021analog} to enhance the adversarial tolerance of GLCC codes, and extend GLCC codes to operate on the analog domain are interesting avenues for future research.

The rest of this paper is organized as follows. In Section \ref{problem statement}, we formally formulate the problem of distributed multivariate polynomial evaluations. 
In Section \ref{Main:Result}, we present the main results of this paper and discuss its connection to related works. 
In Section \ref{section:GLCC}, we describe the proposed GLCC codes, and analyze their privacy and system performance. 
In Section \ref{experiments}, we apply GLCC to distributed learning, and provide experimental results to demonstrate the performance of GLCC codes. Finally, the paper is concluded in Section \ref{conclusion}.

\subsubsection*{Notation}For a finite set $\mathcal{K}$, $|\mathcal{K}|$ denotes its cardinality.
For any positive integers $m,n$ such that $m<n$, $[n]$ and $[m:n]$ denote the sets $\{1,2,\ldots,n\}$ and $\{m,m+1,\ldots,n\}$, respectively.
Define $Y_{\mathcal{K}}$ as $\{Y_{k_1},Y_{k_2}\ldots,Y_{k_{m}}\}$ for any index set $\mathcal{K}=\{k_1,k_2,\ldots,k_{m}\}\subseteq[n]$.

\section{Problem Formulation}\label{problem statement}
Consider the problem of evaluating a multivariate polynomial $\phi:\mathbb{F}_q^{U}\rightarrow\mathbb{F}_q^{V}$ of total degree at most $D$ over a dataset $X=(X_1,X_2,\ldots,X_M)$ of $M$ input data, where $U$ and $V$ are the  input and output dimensions of polynomial $\phi$, respectively, over some finite field $\mathbb{F}_q$ of size $q$.
We are interested in computing polynomial evaluations of the dataset, over a distributed computing system with a master node and $N$ worker nodes, in which the goal of the master is to compute $Y=(Y_1,Y_2,\ldots,Y_M)$ such that $Y_m\triangleq \phi(X_m)$ for all $m\in[M]$ in the presence of up to $A$ adversarial workers that 
return arbitrarily erroneous computation results,
while keeping the dataset private from \emph{any} colluding subset of up to $T$ workers, as shown in Fig. \ref{GLCC}.



For this purpose, a distributed computing protocol operates in three phases: sharing, computation, and reconstruction. The details of these phases are described as follows.
\begin{itemize}
  \item \textbf{Sharing:} In order to keep the dataset private and exploit the computational power at the workers, the master sends a privately encoded version of input data to each worker. The encoded data for worker $n$ is denoted by $\widetilde{X}_{n}$,
  which is generated by computing some encoding function over the dataset $X$ and some random data $Z$ generated privately at the master.
  \item \textbf{Computation:} Upon receiving the encoded data  $\widetilde{X}_{n}$, worker $n$  generates a response $\widetilde{Y}_n$ 
according to the received data $\widetilde{X}_{n}$ and the public polynomial $\phi$.
Then the worker $n$ sends $\widetilde{Y}_{n}$ back to the master.

  \item \textbf{Reconstruction:} 
Due to the heterogeneity of computing resources and unreliable network conditions in distributed computing systems, there are some straggler workers who may fail to respond in time, which prolongs task completion time. In order to speed up computation, the master only waits for the responses from a subset of the fastest $K$ workers for some design parameter $K\leq N$, and then decodes the desired evaluations from their responses. This allows the computation protocol to tolerate any subset of up to $N-K$ stragglers.
\end{itemize}

A distributed computing protocol needs to satisfy the following two basic requirements.
\begin{itemize}
\item\textbf{Privacy Constraint:} Any $T$ colluding workers must not reveal any information about the dataset, i.e.,
\begin{IEEEeqnarray}{c}
I(X;\widetilde{X}_{\mathcal{T}})=0, \quad\forall\, \mathcal{T}\subseteq[N],|\mathcal{T}|=T. \label{LCC:privacy}
\end{IEEEeqnarray}
\item\textbf{Correctness Constraint:} The master must be able to correctly decode the desired evaluations $Y=(Y_1,Y_2,\ldots,Y_M)$ from the collection of responses of \emph{any} fastest $K$ workers, even when up to $A$ out of the $K$ workers are adversarial, i.e.,
\begin{IEEEeqnarray*}{c}
       H(Y|\widetilde{Y}_{\mathcal{K}})=0,\quad\forall\,\mathcal{K}\subseteq[N],|\mathcal{K}|=K. \label{LCC:correctness}
       \end{IEEEeqnarray*}
\end{itemize}

The performance of the distributed computing protocol is measured by the following key quantities:
\begin{enumerate}
  \item[1.] The \emph{recovery threshold} $K$, which is the minimum number of workers the master needs to wait for in order to recover the desired function evaluations.
  \item[2.] The \emph{communication cost}, which is comprised of the normalized upload cost for the dataset and the normalized download cost from the workers, defined as
  \begin{IEEEeqnarray}{c}
  P_u\!\triangleq\!\frac{\sum_{i=1}^{N}\!H(\widetilde{X}_i)}{U},\;\; P_d\!\triangleq\!\max\limits_{\mathcal{K}:\mathcal{K}\subseteq[N],|\mathcal{K}|=K}\frac{H(\widetilde{Y}_{\mathcal{K}})}{V}, \label{upload and download} \IEEEeqnarraynumspace
  \end{IEEEeqnarray}
  where the information entropy is defined in base $q$ units. 
  \item[3.] The \emph{computation complexity}, which includes the complexities of encoding, worker computation, and decoding. The encoding complexity ${C}_{e}$ at the master is defined as the number of arithmetic operations required to generate the encoded data $\widetilde{X}_{[N]}$, normalized by $U$. The complexity of worker computation ${C}_{w}$ is defined as the number of arithmetic operations required to compute the response $\widetilde{Y}_n$, maximized over $n\in[N]$ and normalized by the complexity of evaluating the polynomial on a single input. Finally, the decoding complexity ${C}_d$ at the master is defined as the number of arithmetic operations required to decode the desired evaluations $Y$ from the responses of fastest workers in $\mathcal{K}$, maximized over all $\mathcal{K}\subseteq[N]$ with $|\mathcal{K}|=K$ and normalized by $V$.
\end{enumerate}


Given the  distributed computing framework above, our goal is to design encoding, computing and decoding functions that establish a flexible tradeoff among the recovery threshold, the communication cost and the computation complexity to simultaneously provide resiliency against stragglers, security against adversarial workers and information-theoretic privacy of the dataset.

\section{Main Results and Discussions}\label{Main:Result}
We propose a family of novel Generalized Lagrange Coded Computing (GLCC) codes for the problem of evaluating multivariate polynomial in the distributed computing system.
We state our main results in the following theorem and discuss their connection to related works.

\begin{Theorem}\label{GLCC:theorem}
For computing any multivariate polynomial $\phi$ of total degree at most $D$ over $M$ input data on a finite field $\mathbb{F}_q$ of size $q$, over a distributed computing system of $N$ workers with $T$-colluding privacy constraint and up to $A$ adversary workers, the following performance metrics are achievable as long as $N\geq \big\lceil D(\frac{M-G}{GL}+T)+\frac{(G-1)M+G}{GL}+2A\big\rceil$ and $q\geq M+LN$, for any positive integers $G,L$ such that $G|M$. 
\begin{IEEEeqnarray*}{l}
\text{Recovery Threshold:}\;\;\notag\\
\quad\quad \;
K=\left\lceil D\Big(\frac{M-G}{GL}+T\Big)+\frac{(G-1)M+G}{GL}+2A\right\rceil, \\
\text{Upload Cost:}\, P_u=GLN,\\
\text{Download Cost:}\, P_d=KL,\\
\text{Encoding Complexity:}\, {C}_e\!=\!\mathcal{O}\!\left(\!GNL(\log(\!N\!L))^2\log\log(NL)\!\right)\!, \\
\text{Worker Computation Complexity:}\, {C}_w=\mathcal{O}(GL), \\
\text{Decoding Complexity:}\, {C}_d\!=\!\mathcal{O}(NL(\log (NL))^2\log\log (NL)).
\end{IEEEeqnarray*}
\end{Theorem}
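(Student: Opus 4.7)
The plan is to give an explicit construction of GLCC codes and then verify each of the six performance claims together with the privacy and correctness constraints.

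First I would set up the encoding. Partition $X=(X_1,\ldots,X_M)$ into $G$ groups $X^{(1)},\ldots,X^{(G)}$ of size $M/G$ each (permissible because $G\mid M$), and choose three disjoint collections of distinct elements of $\mathbb{F}_q$: data-recovery points $\{\mu_{g,j}\}$ for $g\in[G],\,j\in[M/G]$, privacy points $\{\beta_{g,t}\}$ for $t\in[T]$, and $L$ worker-offset points $\{\delta_\ell\}_{\ell\in[L]}$. For each group $g$, let $u_g(x)$ be the Lagrange interpolation polynomial of degree at most $M/G+T-1$ with $u_g(\mu_{g,j})=X_{(g-1)M/G+j}$ and $u_g(\beta_{g,t})=Z_{g,t}$, where the $Z_{g,t}$ are i.i.d.\ uniform random pads. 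Assign to each worker $n$ a distinct scalar $\gamma_n\in\mathbb{F}_q$, and send the $GL$ shares $\widetilde{X}_{n,g,\ell}=u_g(\gamma_n+\delta_\ell)$. This immediately gives $P_u=GLN$.

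Next I would design the response phase. For each $\ell\in[L]$, worker $n$ evaluates $\phi$ on its $G$ received values and returns the $L$ aggregates
\[\widetilde{Y}_{n,\ell}=\sum_{g=1}^{G}\pi_{g,\ell}(\gamma_n)\,\phi\bigl(u_g(\gamma_n+\delta_\ell)\bigr),\]
where the interference-cancellation polynomials $\pi_{g,\ell}$ are chosen so that, viewed as functions of $\gamma_n$, the entire family $\{\widetilde{Y}_{n,\ell}\}_{\ell\in[L]}$ can be read off as $L$ interleaved blocks of evaluations of a single univariate polynomial $h(x)$ of degree $D\bigl(\tfrac{M-G}{GL}+T\bigr)+\tfrac{(G-1)M+G}{GL}-1$, with the property that $h$ evaluated at $\mu_{g,j}$ returns $Y_{(g-1)M/G+j}$. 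Concretely, $\pi_{g,\ell}$ will be a product of affine factors indexed over the other groups $g'\neq g$, calibrated so that the $\phi(u_{g'}(\cdot))$ contributions vanish at every recovery point of group $g$, while keeping the degree of each $\pi_{g,\ell}$ just small enough to meet the claimed bound.

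Then I would handle reconstruction. The master treats the $KL$ received values as $K$ evaluations of $h$ at the distinct worker points $\gamma_n$ (one block per $\ell$); with up to $A$ adversarial workers this is a Reed--Solomon codeword of length $K$ with at most $A$ errors, so $K\ge\deg(h)+1+2A$ suffices for unique decoding via Berlekamp--Massey, which is exactly the stated recovery threshold. Privacy follows because the $T$ random pads $Z_{g,\cdot}$ together with the fact that a worker's $GL$ shares are determined by $\gamma_n$ and the fixed $\delta_\ell$ make any $T$ workers' joint shares uniformly distributed and hence independent of $X$, mirroring the LCC privacy argument. The three complexity bounds follow from fast multipoint Lagrange evaluation at the master for $C_e$, counting $GL$ evaluations of $\phi$ per worker for $C_w$, and fast Reed--Solomon decoding for $C_d$.

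The main obstacle is the algebraic construction of the interference-cancellation polynomials $\pi_{g,\ell}$. They must simultaneously (i) cancel all cross-group contributions at each data-recovery point $\mu_{g,j}$, (ii) keep the composite polynomial $h(x)$ at the precise low degree required to match the ceiling expression for $K$ (rather than the much larger $D(M+T-1)+1+2A$ of vanilla LCC, which is recovered as the special case $G=L=1$), and (iii) be efficiently computable from public information so that neither encoding nor worker complexity is hurt. Verifying that these three requirements are jointly satisfiable for every admissible pair $(G,L)$ is the crux of the proof; once the polynomial $h$ is in hand, privacy, correctness, and the complexity counts all reduce to standard coding-theoretic bookkeeping.
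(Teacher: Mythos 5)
Your high-level plan — partition into $G$ groups, Lagrange-encode each group with random pads, multiply cross-group responses by interference-cancellation polynomials, recover by Reed--Solomon decoding of a single composite polynomial $h$ — is exactly the paper's approach, and you correctly identify the design of the $\pi_{g,\ell}$ (in the paper, $\pi_g(x)=\prod_{g'\neq g}\prod_{r\in[M/G]}(x-\beta_{g',r})$) as the crux. However, there is a genuine gap in the privacy argument that breaks the scheme whenever $L>1$. You pad each group polynomial $u_g$ with only $T$ random values $Z_{g,1},\ldots,Z_{g,T}$, giving $u_g$ degree at most $M/G+T-1$. But each worker receives $L$ evaluations of each $u_g$, so any $T$ colluding workers jointly observe $LT$ evaluations of $u_g$. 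Writing these shares as a data term plus a noise term, the noise-contribution matrix has size $LT\times T$ and therefore rank at most $T<LT$; there exist linear combinations of the observed shares that lie in its left null space and hence depend only on $X$, violating $I(X;\widetilde{X}_{\mathcal{T}})=0$. ``Mirroring the LCC privacy argument'' fails precisely because in LCC a worker holds one evaluation per group, not $L$. The paper fixes this by using $LT$ pads per group (so $f_g$ has degree $M/G+LT-1$) and invoking the invertibility of the resulting $LT\times LT$ generalized Cauchy matrix (the paper's Lemma 1).

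There is a secondary error in the degree and decoding accounting. For the construction you actually wrote down, $\phi(u_g(x))$ has degree $D(M/G+T-1)$ and hence $h$ has degree $D(M/G+T-1)+(G-1)M/G$, not the quantity $D\bigl(\tfrac{M-G}{GL}+T\bigr)+\tfrac{(G-1)M+G}{GL}-1$ you state, which is roughly a factor $L$ too small and not even an integer in general. You then treat the $KL$ returned sub-responses as ``$K$ evaluations of $h$ with at most $A$ errors,'' but they are $KL$ evaluations at $KL$ distinct points with up to $AL$ of them corrupted (an adversarial worker can corrupt all $L$ of its sub-responses), so the correct unique-decoding condition is $KL\ge\deg(h)+1+2AL$. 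These two slips happen to cancel numerically to reproduce the theorem's $K$, but the reasoning is unsound; the correct derivation uses $LT$ pads, $\deg(h)=D(M/G+LT-1)+(G-1)M/G$, and the $KL$ versus $AL$ count. Finally, a minor point: using evaluation points of the form $\gamma_n+\delta_\ell$ requires an additional argument that all $NL$ such sums are pairwise distinct (distinctness of the $\gamma_n$ alone is not enough); the paper simply picks $NL$ arbitrary distinct points $\alpha_{n,\ell}$ and avoids the issue.
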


\begin{proof}
Theorem \ref{GLCC:theorem} is formally proven in Section \ref{section:GLCC} by describing and analyzing the proposed GLCC codes.
\end{proof}

\begin{figure*}[htbp]
\centering
\includegraphics[width=15.0cm]{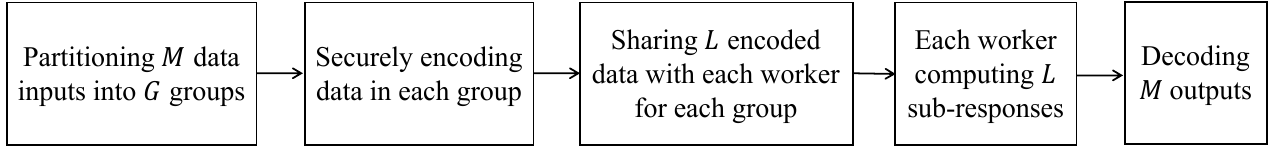}
    \caption{GLCC workflow.}
    \label{GLCC:Flowchart}
\end{figure*}

For the proposed GLCC codes, at a high level, as shown in Fig. \ref{GLCC:Flowchart}, the master first partitions the dataset of size $M$ into $G$ disjoint groups, each containing $M/G$ data inputs. To keep the dataset private from any $T$ colluding workers, the master encodes the $M/G$ data in each of the groups along with $LT$ random noises using a Lagrange interpolating polynomial of degree $M/G+LT-1$, and then shares the evaluations of the polynomial at $L$ distinct points with each worker for each group of inputs. 
This creates computational redundancy across the distributed workers that are used for providing robustness against stragglers, adversaries, and data privacy.
After receiving the privately encoded data for the $G$ groups, each worker first evaluates $\phi$ over the encoded data in each group, which can be viewed as evaluating the composition of a Lagrange polynomial of degree $M/G+LT-1$ with the multivariate polynomial of total degree $D$. The worker then generates $L$ sub-responses by further combining the evaluation results from all groups, using coefficients that are evaluations of carefully chosen polynomials of degree $M-M/G$. Overall, each of the sub-responses at workers can be viewed as an evaluation of a composition polynomial of degree $D\left(M/G+LT-1\right)+\left(M-{M}/{G}\right)$. The master can interpolate this polynomial from any $D\left(M/G+LT-1\right)+\left(M-{M}/{G}\right)+2AL+1$ sub-responses in the presence of $A$ adversarial workers who may provide $AL$ arbitrarily erroneous sub-responses at most, and finally completes the computation by evaluating the polynomial. Since each worker generates $L$ sub-responses, it is enough to wait for the responses from any fastest $K=\big\lceil\frac{D\left(M/G+LT-1\right)+\left(M-{M}/{G}\right)+2AL+1}{L}\big\rceil=\big\lceil D(\frac{M-G}{GL}+T)+\frac{(G-1)M+G}{GL}+2A\big\rceil$ workers such that $KL\geq D\left(M/G+LT-1\right)+\left(M-{M}/{G}\right)+2AL+1$.

\begin{Remark}\label{GLCC:remark}
Our proposed GLCC codes include LCC codes~\cite{LCC} as a special case by setting $G=L=1$. 
By adjusting the parameters $G$ and $L$, GLCC codes achieve a more flexible tradeoff among recovery threshold, communication cost, and computation complexity. It can be clearly observed from Theorem \ref{GLCC:theorem} that, 
as the number of partitioning groups $G$ increases, GLCC codes can reduce the recovery threshold and download cost by increasing the upload cost, encoding complexity and worker computation complexity. Similarly, as the number of sub-responses $L$ at each worker increases, GLCC codes can reduce the recovery threshold by increasing communication and computation costs.
One can optimize the design parameters $G$ and $L$ to minimize the delay of the overall distributed computation. 
Notably, in comparison to our ISIT version \cite{zhu2022generalized}, we allow each worker to generate multiple sub-responses and introduce the additional optimization parameter $L$, further reducing the recovery threshold and achieving a more flexible tradeoff among system performance. Additionally, we include extensive experiments in Section \ref{expe:results} to demonstrate  the performance gain of GLCC codes over LCC codes in a distributed training application.
\end{Remark}

\begin{Remark}\label{field:remark}
The proposed GLCC codes can operate on any finite field $\mathbb{F}_q$ of size $q\geq M+LN$. Apparently, the required finite field size is independent of the parameter $G$ and linearly dependent on the parameter $L$. When a small finite field is desired, the parameter $L$ can be set to $1$, achieving the same finite field size $M+N$ as LCC codes. However, GLCC codes still retain the ability to achieve a lower recovery threshold by adjusting the parameter $G$.
\end{Remark}

\begin{Remark}
As mentioned in the introduction, the problem of evaluating multivariate polynomial of total degree $D$ can be regarded as computing a linear combination of various evaluations of a multi-linear map function, where a multi-linear map function $f$ with $D$ variables $(x_1,x_2,\ldots,x_D)$ is a function satisfying $f(x_1,\!\ldots\!,x_{s-1},ax_s+bx_s',x_{s+1},\!\ldots\!,x_D)\allowbreak=af(x_1,\ldots,x_{s-1},x_s,x_{s+1},\ldots,x_D)+bf(x_1,\ldots,x_{s-1},x_s',x_{s+1},\ldots,x_D)$ for all $s\in[D]$ and $a,b\in \mathbb{F}_q$. 
The problem of evaluating multivariate polynomial was also focused on \cite{jia2021cross} in the special case of $T=A=0$, by presenting multilinear map codes for the problem of evaluating multilinear map function over a batch of data. The multilinear map codes are constructed using the idea of cross subspace alignment that  facilitates an interference alignment structure with a form of Cauchy-Vandermonde matrices, such that the desired computations occupy dimensions corresponding to the Cauchy terms, and the remaining interferences are aligned  within the higher-order terms that constitute the Vandermonde part. 
The computation codes  \cite{jia2021cross} for evaluating multivariate polynomial are obtained by a linear combination of the  multilinear map codes, which achieve identical performance to GLCC codes in the special case of $L=1$ and $T=A=0$.
In contrast to the multi-linear map codes \cite{jia2021cross}, GLCC codes employ Lagrange interpolating polynomial to encode each group of inputs, and then combine the computation results at the $G$ groups for $L$ times using carefully chosen polynomials as coefficients. This achieves a distinct interference alignment structure, such that the master
can interpolate a polynomial from worker responses and finally completes the computation by evaluating the polynomial.
Moreover, allowing each worker to compute and generate $L$ sub-responses further reduces the number of workers the master needs to wait for since the master can obtain more evaluations of the polynomial from each worker. 
\end{Remark}

\begin{Remark}\label{application:SBMM}
GLCC codes can be directly applied to solving specific batch processing problems in coded distributed computing, exemplified by Secure Bath Matrix Multiplication (SBMM). In SBMM, the master aims to compute the pairwise products $(\mathbf{A}^1\mathbf{B}^1,\mathbf{A}^2\mathbf{B}^2,\ldots,\mathbf{A}^{\lambda}\mathbf{B}^{\lambda})$ of two batches of matrices $\mathbf{A}=(\mathbf{A}^1,\mathbf{A}^2,\ldots,\mathbf{A}^{\lambda})$ and $\mathbf{B}=(\mathbf{B}^1,\mathbf{B}^2,\ldots,\mathbf{B}^{\lambda})$ over the distributed computing system while keeping the data matrices $\mathbf{A}$ and $\mathbf{B}$ secure from any $T$ colluding workers. 
There are two solutions for the SBMM problem within the framework of GLCC codes. The first one involves directly applying GLCC codes to complete the desired SBMM computation by setting $M=\lambda$, $X_{m}=(\mathbf{A}^m,\mathbf{B}^m)$ for all $m\in[M]$, and $\phi=\phi(x,y)=xy$. This results in a recovery threshold of $K=\big\lceil 2(\frac{\lambda-G}{GL}+T)+\frac{\lambda(G-1)+G}{GL}\big\rceil$. Alternatively, to achieve a flexible tradeoff among system performance for the SBMM problem, the second solution allows data matrices to be partitioned into smaller sub-matrices. Assume that the matrices $\mathbf{A}^{m}$ and $\mathbf{B}^m$ are horizontally and vertically partitioned into $s\times t$ and $t\times v$ equal-sized sub-matrices, denoted by $\mathbf{A}^{m}=(\mathbf{A}^{m}_{i,j})_{i\in[s],j\in[t]}$ and  $\mathbf{B}^m=(\mathbf{B}^m_{j,k})_{j\in[t],k\in[v]}$, respectively, for all $m\in[\lambda]$. 
Then bilinear complexity \cite{Smirnov,Strassen} can transform the computation of matrix multiplication $\mathbf{A}^m\mathbf{B}^m$ into computing the pairwise products of two batches of sub-matrices $(\tilde{\mathbf{A}}^{m}_{1},\tilde{\mathbf{A}}^{m}_{2},\ldots,\tilde{\mathbf{A}}^{m}_{R})$ and $(\tilde{\mathbf{B}}^{m}_{1},\tilde{\mathbf{B}}^{m}_{2},\ldots,\tilde{\mathbf{B}}^{m}_{R})$, i.e., the matrix multiplication $\mathbf{A}^m\mathbf{B}^m$ can be completed by computing the pairwise products  $(\tilde{\mathbf{A}}^{m}_{1}\tilde{\mathbf{B}}^{m}_{1},\tilde{\mathbf{A}}^{m}_{2}\tilde{\mathbf{B}}^{m}_{2},\ldots,\tilde{\mathbf{A}}^{m}_{R}\tilde{\mathbf{B}}^{m}_{R})$ for all $m\in[\lambda]$, where the positive integer parameter $R=R(s,t,v)$ denotes the bilinear complexity for multiplying two matrices of sizes $s$-by-$t$ and $t$-by-$v$, and $\tilde{\mathbf{A}}^{m}_{r}$ and $\tilde{\mathbf{B}}^{m}_{r}$ are specific linear functions of $\{\mathbf{A}^{m}_{i,j}:i\in[s],j\in[t]\}$ and $\{\mathbf{B}^m_{j,k}:j\in[t],k\in[v]\}$, respectively, for all $r\in[R]$. Accordingly, GLCC codes can perform the SBMM computation by setting $M=\lambda R$, $X_{(m-1)R+r}=(\tilde{\mathbf{A}}^{m}_{r},\tilde{\mathbf{B}}^{m}_{r})$ for all $r\in[R]$ and $m\in[\lambda]$, and $\phi=\phi(x,y)=xy$, resulting in a recovery threshold of $K=\big\lceil 2(\frac{\lambda R-G}{GL}+T)+\frac{\lambda R(G-1)+G}{GL}\big\rceil$.
\end{Remark}

\section{Generalized Lagrange Coded Computing Codes}\label{section:GLCC}
In this section, we describe the GLCC codes, and analyze their privacy guarantees, communication cost, and computation complexity. This provides the proof for Theorem \ref{GLCC:theorem}.

We start with a simple example to illustrate the key idea behind the proposed GLCC codes.
In this example, to establish the connection between GLCC codes and LCC codes, we first present the construction of LCC codes and then demonstrate how LCC codes are generalized to GLCC codes.

\subsection{Illustrative Example}\label{Example:GLCC}
We consider the function $\phi(X_m)=X_m^2$ with system parameters $N=20,M=4,T=1,A=1$, i.e., the master wishes to compute $X_1^2,X_2^2,X_3^2,X_4^2$ from a distributed system with $20$ workers, while providing data privacy against $1$ worker and computation robustness against $1$ adversarial worker.

Assume that $\{\alpha_n,\alpha_{n,1},\alpha_{n,2}\!:\!n\!\in\![20]\}$ are pairwise distinct elements from $\mathbb{F}_q$ such that $\{\alpha_n,\alpha_{n,1},\alpha_{n,2}\!:\!n\in[20]\}\cap[8]\!\!=\!\!\emptyset$.

\noindent\textbf{LCC Codes:} To provide robustness against stragglers and adversarial workers, LCC codes introduce computational redundancies in the distributed computing system using an encoding polynomial $f(x)$ of degree $4$, given by
\begin{IEEEeqnarray}{rCl}
f(x)&=&X_{1}\cdot\frac{(x-2)(x-3)(x-4)(x-5)}{(1-2)(1-3)(1-4)(1-5)}\notag\\
&&+X_{2}\cdot\frac{(x-1)(x-3)(x-4)(x-5)}{(2-1)(2-3)(2-4)(2-5)}\notag\\
&&+X_{3}\cdot\frac{(x-1)(x-2)(x-4)(x-5)}{(3-1)(3-2)(3-4)(3-5)}\notag\\
&&+X_{4}\cdot\frac{(x-1)(x-2)(x-3)(x-5)}{(4-1)(4-2)(4-3)(4-5)}\notag\\
&&+Z\cdot\frac{(x-1)(x-2)(x-3)(x-4)}{(5-1)(5-2)(5-3)(5-4)}, \notag \IEEEeqnarraynumspace
\end{IEEEeqnarray}
where $Z$ is a random uniform noise and is used for providing privacy guarantees.

The master shares the evaluation $f(\alpha_n)$ with each worker $n,n\in[20]$, who computes the response $\phi(f(\alpha_n))=f^2(\alpha_n)$ and sends it back to the master. Obviously, the response at each worker is an evaluation of the polynomial $f^2(x)$ of degree $8$. Therefore, the computational responses across all the workers form a $(20,9)$ RS codeword. The master can recover the polynomial $f^2(x)$ from the responses collected from any $K=11$ workers, even if there is $1$ adversarial worker. 
Finally, evaluating $f^2(x)$ at $x=1,2,3,4$ will obtain the desired computations $X_1^2,X_2^2,X_3^2,X_4^2$.

\noindent\textbf{GLCC Codes:}
Let us set $G=2$ and $L=2$, meaning that the input data are partitioned into $G=2$ groups and the master shares $L=2$ encoded data with each worker for each group of inputs. Specifically, we first partition the $M=4$ data into $G=2$ groups, each containing $2$ inputs, i.e., $(X_{1},X_{2})$ and $(X_{3},X_{4})$. 
Then, we employ LCC codes to encode each group of input data. Note that since the input data are partitioned into multiple groups and each worker receives multiple encoded data for each group of inputs, it is necessary to choose more random noises to provide privacy protection.
To this end, the master chooses four random noises $Z_{1},Z_{2},Z_{3},Z_{4}$ to mask the input data, and then encodes the two groups of data $(X_{1},X_{2},Z_1,Z_2)$ and $(X_{3},X_{4},Z_3,Z_4)$ using Lagrange interpolating polynomials:
\begin{IEEEeqnarray*}{rCl}
f_1(x)&\!=\!&X_{1}\!\cdot\!\frac{(x\!-\!2)(x\!-\!5)(x\!-\!6)}{(1\!-\!2)(1\!-\!5)(1\!-\!6)}\!+\!X_{2}\!\cdot\!\frac{(x\!-\!1)(x\!-\!5)(x\!-\!6)}{(2\!-\!1)(2\!-\!5)(2\!-\!6)} \notag\\ &&\!+Z_{1}\!\cdot\!\frac{(x\!-\!1)(x\!-\!2)(x\!-\!6)}{(5\!-\!1)(5\!-\!2)(5\!-\!6)}\!+\!Z_{2}\!\cdot\!\frac{(x\!-\!1)(x\!-\!2)(x\!-\!5)}{(6\!-\!1)(6\!-\!2)(6\!-\!5)},\\
f_2(x)&\!=\!&X_{3}\!\cdot\!\frac{(x\!-\!4)(x\!-\!7)(x\!-\!8)}{(3\!-\!4)(3\!-\!7)(3\!-\!8)}\!+\!X_{4}\!\cdot\!\frac{(x\!-\!3)(x\!-\!7)(x\!-\!8)}{(4\!-\!3)(4\!-\!7)(4\!-\!8)}\notag\\
&&\!+Z_{3}\!\cdot\!\frac{(x\!-\!3)(x\!-\!4)(x\!-\!8)}{(7\!-\!3)(7\!-\!4)(7\!-\!8)}\!+\!Z_{4}\!\cdot\!\frac{(x\!-\!3)(x\!-\!4)(x\!-\!7)}{(8\!-\!3)(8\!-\!4)(8\!-\!7)}.
\end{IEEEeqnarray*}
The encoded data $\widetilde{X}_{n}$ sent to worker $n,n\in[20]$ are given by
\begin{IEEEeqnarray}{c}\notag
\widetilde{X}_{n}=\big(f_1(\alpha_{n,1}), f_2(\alpha_{n,1}), f_1(\alpha_{n,2}), f_2(\alpha_{n,2}) \big).
\end{IEEEeqnarray}

Then the worker $n$ computes the evaluations of the function $\phi(\cdot)$ at the received encoded data:
\begin{IEEEeqnarray}{c}\notag
\phi(f_1(\alpha_{n,1})),\; \phi(f_2(\alpha_{n,1})),\; \phi(f_1(\alpha_{n,2})),\; \phi(f_2(\alpha_{n,2})).
\end{IEEEeqnarray}
To reduce the communication overhead from the workers to the master, the worker $n$ combines the evaluation results from the $G=2$ groups for $L=2$ times, generating the two sub-responses $\widetilde{Y}_{n,1},\widetilde{Y}_{n,2}$, given by
\begin{IEEEeqnarray*}{rCl}
\widetilde{Y}_{n,1}&=&\phi(f_1(\alpha_{n,1}))\cdot\underbrace{(\alpha_{n,1}-3)(\alpha_{n,1}-4)}_{\text{Interference\; Elimination}}\notag\\
&&\quad\quad\quad\quad+\phi(f_2(\alpha_{n,1}))\cdot\underbrace{(\alpha_{n,1}-1)(\alpha_{n,1}-2)}_{\text{Interference\; Elimination}},\notag\\
\widetilde{Y}_{n,2}&=&\phi(f_1(\alpha_{n,2}))\cdot\underbrace{(\alpha_{n,2}-3)(\alpha_{n,2}-4)}_{\text{Interference\; Elimination}} \notag\\
&&\quad\quad\quad\quad+\phi(f_2(\alpha_{n,2}))\cdot\underbrace{(\alpha_{n,2}-1)(\alpha_{n,2}-2)}_{\text{Interference\; Elimination}},
\end{IEEEeqnarray*}
where these coefficients are used to eliminate the interference between the evaluation results from the $G=2$ groups.

The sub-responses $\widetilde{Y}_{n,1}$ and $\widetilde{Y}_{n,2}$ are equivalent to the evaluations of the following composition polynomial $h(x)$ of degree $8$ at points $x=\alpha_{n,1}$ and $\alpha_{n,2}$, respectively.
\begin{IEEEeqnarray*}{rCl}
h(x)&\!=\!&\phi(f_1(x))\cdot(x-3)(x-4)\!+\!\phi(f_2(x))\cdot(x-1)(x-2)\\
&\!=\!&(f_1(x))^2\cdot(x-3)(x-4)\!+\!(f_2(x))^2\cdot(x-1)(x-2).
\end{IEEEeqnarray*}

Similar to the decoding process of LCC codes, $(\widetilde{Y}_{1,1},\widetilde{Y}_{1,2},\allowbreak \ldots, \widetilde{Y}_{20,1},\widetilde{Y}_{20,2})=(h(\alpha_{1,1}),h(\alpha_{1,2}),\ldots,\allowbreak h(\alpha_{20,1}),h(\alpha_{20,2}))$ forms a $(40,9)$ RS codeword. Since each worker generates $L=2$ sub-responses, it is enough to decode $h(x)$ from the responses of any $K=7$ out of $20$ workers, in the presence of $1$ adversarial worker who provides at most $2$ erroneous sub-responses. 

Having recovered $h(x)$, the master evaluates it at points $x=1,2,3,4$ to obtain
\begin{IEEEeqnarray*}{c}
h(1)=6 X_{1}^2,\;\; h(2)=2 X_{2}^2,\;\;   h(3)=2 X_{3}^2,\;\; h(4)=6 X_{4}^2.
\end{IEEEeqnarray*}
The master completes the desired evaluations $X_1^2,X_2^2,X_3^2,X_4^2$ by eliminating the interference from constant coefficients.

Privacy is guaranteed because the input data $X_1,X_2$ and $X_3,X_4$ are padded with independently and uniformly random noises $Z_1,Z_2$ and $Z_3,Z_4$, respectively. This ensures that the encoded data $\widetilde{X}_{n}$ received by any $T=1$ worker is independent of the original  input data. 

Although both GLCC codes and LCC codes can accomplish the desired evaluations by decoding a polynomial of degree $8$, GLCC codes allow each worker to generate two evaluations of this polynomial under the privacy constraint. 
Therefore, compared to LCC codes, GLCC codes achieve a lower recovery threshold by requiring responses from fewer workers to decode this polynomial.
Notably, the use of more random noises to provide privacy guarantees and the utilization of interference elimination coefficients to reduce download cost have increased the degree of the decoding polynomial for GLCC codes, making it equal to the degree of the decoding polynomial for LCC codes in this example. However, as the degree of the multivariate polynomial function to be computed increases, GLCC codes can achieve a lower degree of  decoding polynomial than LCC codes due to partitioning the input data into multiple groups.

In general, both GLCC codes and LCC codes exploit the structure inspired by Lagrange interpolating polynomial to encode the input data, such that the responses at workers can be viewed as evaluations of a composition polynomial and accordingly the decoding is completed by interpolating the polynomial. 
In contrast to LCC codes, GLCC codes propose to partition the inputs into $G$ groups and performs Lagrange encoding within each group, and then combines the computation results from the $G$ groups for $L$ times using carefully designed interference eliminating coefficients.
By allowing each worker to process $GL>1$ encoded inputs and generating $L>1$ sub-responses,  GLCC codes include LCC codes as a special case by setting $G=L=1$ and achieve a more flexible tradeoff among recovery threshold, communication cost, and computation complexity.
The key ideas of GLCC codes can be summarized in three aspects:
\begin{itemize}
    \item \textbf{Grouping Strategy:} The grouping strategy reduces the degree $M/G+LT-1$ of the interpolating polynomial in each group of input data, thereby decreasing the degree of the overall polynomial interpolated at the master. This leads to a lower recovery threshold.
\item \textbf{Interference Elimination:} Using interference eliminating coefficients to combine the computation results from the $G$ groups effectively eliminates the interference among all the groups. This guarantees correctness constraint and particularly reduces the download overhead from workers to the master.
\item \textbf{Multiple Sub-Responses:} While generating more than one sub-response ($L>1$) induces a higher degree ($M/G+LT-1$) for the interpolating polynomial in each group, it still reduces the number of workers the master needs to wait for (i.e., the recovery threshold). This is because each worker generates multiple sub-responses, enabling the responses from fewer workers to compensate for the increased degree of the polynomial interpolated at the master.
\end{itemize}


Next, we formally describe the general construction of GLCC codes.

\subsection{General Description of GLCC Codes}
Let $G,L$ be any positive integers such that $G|M$. Then we divide the dataset $X_1,\ldots,X_M$ evenly into $G$ groups, each containing $ R\triangleq M/G$ data inputs. Denote the $r$-th input of group $g$ by $X_{g,r}$, which is given by
\begin{IEEEeqnarray}{c}\label{GLCC:group}
X_{g,r}=X_{(g-1)R+r}, \quad\forall\,g\in[G], r \in[ R].
\end{IEEEeqnarray}

Next, we present a group of elements $\{\beta_{g,r},\alpha_{n,\ell}:g\in[G],r\in[R+LT],n\in[N],\ell\in[L]\}$ from $\mathbb{F}_q$, which are used as the coding parameters of GLCC codes.
To ensure the achievability of GLCC codes, 
the group of elements needs to satisfy the following four constraints.
\begin{enumerate}
  \item[P1.] For each given $g\in[G]$, $\beta_{g,j}\neq\beta_{g,k}$ for all $j,k\in[R+LT]$ with $j\neq k$;
  \item[P2.] The elements $\{\beta_{g,r}:g\in[G],r\in[R]\}$ are pairwise distinct, i.e., $\beta_{g,r}\neq\beta_{g',r'}$ for any given $g,g'\in[G]$ and $r,r'\in[R]$ with $(g,r)\neq (g',r')$;
  \item[P3.] The elements $\{\alpha_{n,\ell}:n\in[N],\ell\in[L]\}$ are all distinct, i.e., $\alpha_{n,\ell}\neq\alpha_{n',\ell'}$ for all $n,n'\in[N]$ and $\ell,\ell'\in[L]$ with $(n,\ell)\neq (n',\ell')$;
  \item[P4.] The elements in $\{\alpha_{n,\ell}:n\in[N],\ell\in[L]\}$ are distinct from the ones in $\{\beta_{g,r}:g\in[G],r\in[R]\}$, i.e., $\{\alpha_{n,\ell}:n\in[N],\ell\in[L]\}\cap\{\beta_{g,r}:g\in[G],r\in[R]\}
  =\emptyset$.
\end{enumerate}

Let $\{\beta_{g,r},\alpha_{n,\ell}:g\in[G],r\in[R],n\in[N],\ell\in[L]\}$ represent any $GR+LN=M+LN$ pairwise distinct elements from $\mathbb{F}_q$, and $\{\beta_{g,r}:r\in[R+1:R+LT]\}$ be a subset of $\{\alpha_{n,\ell}:n\in[N],\ell\in[L]\}$ with size $LT$ for all $g\in[G]$. It is straightforward to prove that these elements satisfy P1-P4 with
$|\{\beta_{g,r},\alpha_{n,\ell}:g\in[G],r\in[R+LT],n\in[N],\ell\in[L]\}|=M+LN$.

We will describe the construction of GLCC codes using the group of elements satisfying P1-P4. This shows that GLCC codes can operate on arbitrary finite filed $\mathbb{F}_q$ of size $q\!\geq\! M\!+\!LN$.


To guarantee the privacy of data inputs in group $g$ for each $g\in[G]$, the master samples independently and uniformly $LT$ random variables $Z_{g,R+1},Z_{g,R+2},\ldots,Z_{g, R+LT}$ from $\mathbb{F}_q^{U}$, and then constructs a polynomial $f_g(x)$ of degree at most $ R+LT-1$ such that
\begin{IEEEeqnarray}{c}\label{encdoing}
f_g(\beta_{g, r })=\left\{
\begin{array}{@{}ll}
X_{g, r },&\forall\,  r \in[ R]\\
Z_{g, r },&\forall\,  r \in[ R+1: R+LT]
\end{array}\right..
\end{IEEEeqnarray}
By P1 and Lagrange interpolating rule, we can exactly express $f_g(x)$ as
\begin{IEEEeqnarray}{l}\label{encdoing function}
f_g(x)=\sum\limits_{j=1}^{ R}X_{g,j}\cdot\prod_{k\in[ R+LT]\backslash\{j\}}\frac{x-\beta_{g,k}}{\beta_{g,j}-\beta_{g,k}} \notag\\
\quad\quad\quad\quad\quad
+\sum\limits_{j=R+1}^{R+LT}Z_{g,j}\cdot\prod_{k\in[ R+LT]\backslash\{j\}}\frac{x-\beta_{g,k}}{\beta_{g,j}-\beta_{g,k}}. \IEEEeqnarraynumspace
\end{IEEEeqnarray}

For each $n\in [N]$, the master computes the evaluation of the polynomial $f_g(x)$ at point $\alpha_{n,\ell}$ for all $g\in [G]$ and $\ell\in[L]$, and sends these evaluation results to worker $n$.  Hence, a total of $GL$ encoded inputs are sent to worker $n$ for all $n\in[N]$, given by
\begin{IEEEeqnarray}{c}\label{GLCC:upload}
\widetilde{X}_n=\big\{f_1(\alpha_{n,\ell}),f_2(\alpha_{n,\ell}),\ldots,f_G(\alpha_{n,\ell}):\ell\in[L]\big\}.
\end{IEEEeqnarray}



The response $\widetilde{Y}_n$ returned by worker $n$ to the master consists of $L$ sub-responses, denoted by
\begin{IEEEeqnarray}{c}\notag
\widetilde{Y}_{n}=\big\{\widetilde{Y}_{n,1},\widetilde{Y}_{n,2},\ldots,\widetilde{Y}_{n,L}\big\},   
\end{IEEEeqnarray}
where the sub-response $\widetilde{Y}_{n,\ell}$ is generated by evaluating the polynomial $\phi$ over the $G$ encoded inputs $f_1(\alpha_{n,\ell}),f_2(\alpha_{n,\ell}),\ldots,f_G(\alpha_{n,\ell})$ and then computing a linear combination of these evaluations for all $\ell\in[L]$, as follow,
\begin{IEEEeqnarray}{c}\label{GLCC:download}
\widetilde{Y}_{n,\ell}=\sum\limits_{g=1}^{G}\left(\phi(f_g(\alpha_{n,\ell}))\cdot\prod\limits_{g'\in[G]\backslash\{g\}}\prod\limits_{r\in[ R]}(\alpha_{n,\ell}-\beta_{g',r})\right).\IEEEeqnarraynumspace
\end{IEEEeqnarray}


Apparently, for any non-adversary worker $n$, the sub-response $\widetilde{Y}_{n,\ell}$ is equivalent to evaluating the following polynomial $h(x)$ at the point $x=\alpha_{n,\ell}$.
\begin{IEEEeqnarray}{c}\label{GLCC:answer fun}
h(x)=\sum\limits_{g=1}^{G}\left(\phi(f_g(x))\cdot\prod\limits_{g'\in[G]\backslash\{g\}}\prod\limits_{ r \in[ R]}(x-\beta_{g', r })\right). \IEEEeqnarraynumspace
\end{IEEEeqnarray}

Since $\phi$ is a multivariate polynomial with total degree at most $D$ and $f_g(x)$ is a polynomial of degree $R+LT-1$ for any $g\in[G]$ by \eqref{encdoing function},
the composite polynomial $\phi(f_g(x))$ has degree at most $D(R+LT-1)$.
Accordingly, $h(x)$ can be viewed as a polynomial in variable $x$ with degree at most $D(R+LT-1)+(G-1)R$.
Recall from P3 that $\{\alpha_{n,\ell}:\ell\in[L],n\in[N]\}$ are pairwise distinct elements in $ \mathbb{F}_q$.
Thus, the sub-responses $(\widetilde{Y}_{1,1},\ldots,\widetilde{Y}_{1,L},\ldots,\widetilde{Y}_{N,1},\ldots,\widetilde{Y}_{N,L})=(h(\alpha_{1,1}),\ldots,h(\alpha_{1,L}),\ldots,h(\alpha_{N,1}),\ldots,h(\alpha_{N,L}))$ forms an $(NL,D(R+LT-1)+(G-1)R+1)$ RS codeword. Notably, there are up to $A$ adversary workers who may provide $AL$ arbitrarily erroneous sub-responses at most. 
By using RS decoding algorithms \cite{Lin,Gao}, the master can decode the polynomial $h(x)$ from at most $D(R+LT-1)+(G-1)R+2AL+1$ sub-responses, in the presence of $A$ adversary workers. Thus the proposed GLCC codes achieve a recovery threshold of $K=\big\lceil D(\frac{R-1}{L}+T)+\frac{(G-1)R+1}{L}+2A\big\rceil$ such that $KL\geq D(R+LT-1)+(G-1)R+2AL+1$.

For any $g\in[G]$ and $r \in[ R]$, evaluating $h(x)$ at $x=\beta_{g, r }$ yields
\begin{IEEEeqnarray}{rCl}\label{recover:result}
h(\beta_{g, r })&=&\sum\limits_{j=1}^{G}\Big(\phi(f_j(\beta_{g, r }))\cdot\prod\limits_{g'\in[G]\backslash\{j\}}\prod\limits_{k\in[ R]}(\beta_{g, r }-\beta_{g',k})\Big)\notag\\
&\overset{(a)}{=}&\phi(f_g(\beta_{g, r }))\cdot\prod\limits_{g'\in[G]\backslash\{g\}}\prod\limits_{k\in[ R]}(\beta_{g, r }-\beta_{g',k})\notag\\
&\overset{(b)}{=}&\phi(X_{g, r })\cdot\prod\limits_{g'\in[G]\backslash\{g\}}\prod\limits_{k\in[ R]}(\beta_{g, r }-\beta_{g',k}),\label{GLCC:evaluations}
\end{IEEEeqnarray}
where $(a)$ is due to $\prod_{g'\in[G]\backslash\{j\}}\prod_{k\in[ R]}(\beta_{g, r }-\beta_{g',k})=0$ for all $j\in[G]\backslash\{g\}$, and $(b)$ follows by \eqref{encdoing}.
From P2 and \eqref{GLCC:group}, we can obtain the evaluation $Y_{(g-1) R+ r }=\phi(X_{(g-1) R+ r })=\phi(X_{g, r })$ by eliminating the non-zero constant term $\prod_{g'\in[G]\backslash\{g\}}\prod_{k\in[ R]}(\beta_{g, r }-\beta_{g',k})$ in $h(\beta_{g, r })$.
Finally, the master completes the desired computation $Y=(Y_1,Y_2,\ldots,Y_M)$ after traversing all $g\in[G], r \in[ R]$.

As a result, for any given parameters $G,L$ such that  $G|M$, the proposed GLCC codes achieve the recovery threshold $K=\big\lceil D(\frac{R-1}{L}+T)+\frac{(G-1)R+1}{L}+2A\big\rceil$ for any finite field of size $q\geq M+LN$.



\begin{Remark}\label{GLCC:pstragglers}
While GLCC codes treat stragglers as erasures and discard the computations done by the straggler workers, 
they can also exploit partial computation results at straggler workers to further accelerate the computation.
Specifically, in our GLCC codes, the $L$ sub-responses $\widetilde{Y}_{n,1},\widetilde{Y}_{n,2},\ldots,\widetilde{Y}_{n,L}$ \eqref{GLCC:download} at each worker $n,n\in[N]$ can be computed sequentially, and the worker $n$ sends the sub-response $\widetilde{Y}_{n,\ell}$ back to the master as soon as it is completed for any $\ell\in[L]$.
The master can recover the polynomial $h(x)$ \eqref{GLCC:answer fun} and then complete desired computation once it receives any $D(R+LT-1)+(G-1)R+2AL+1$ sub-responses from the $N$ workers. For the example in Section \ref{Example:GLCC}, it is sufficient to just wait for any $13$ out of $20$ sub-responses, as highlighted in red in Table \ref{tab:GLCC}.  
However, how to design coding schemes which can efficiently exploit the partial computation results done by slower straggler workers is another interesting topic worthy of specialized research \cite{ozfatura2021coded,das2022coded,kianidehkordi2020hierarchical}. In this paper, we only consider the case where a non-straggling worker is available to generate all the sub-responses, and a straggler is not able to provide any computation results. 
\end{Remark}

\begin{table*}[htbp]
\centering
\caption{An example of GLCC codes with partial stragglers. All the parameters are identical to those in Section \ref{Example:GLCC}. The desired computation can be completed with any $13$ out of $20$ sub-responses, such as those highlighted in red.}\label{tab:GLCC}
  \begin{tabular}{|c|c|c|c|c|c|}
  \hline
  & Worker $1$ & Worker $2$ & Worker $3$ & Worker $4$ & Worker $5$ \\ \hline
\multirow{2}{*}{Coded Storage} & $f_1(\alpha_{1,1})\;\, f_2(\alpha_{1,1})$ & $f_1(\alpha_{2,1})\;\, f_2(\alpha_{2,1})$ & $f_1(\alpha_{3,1})\;\, f_2(\alpha_{3,1})$ & $f_1(\alpha_{4,1})\;\, f_2(\alpha_{4,1})$ & $f_1(\alpha_{5,1})\;\, f_2(\alpha_{5,1})$ \\ \Xcline{2-6}{0.5pt}
& $f_1(\alpha_{1,2})\;\, f_2(\alpha_{1,2})$ & $f_1(\alpha_{2,2})\;\, f_2(\alpha_{2,2})$ & $f_1(\alpha_{3,2})\;\, f_2(\alpha_{3,2})$ & $f_1(\alpha_{4,2})\;\, f_2(\alpha_{4,2})$ & $f_1(\alpha_{5,2})\;\, f_2(\alpha_{5,2})$ \\
\hline
Sub-response 1 & \red{$\widetilde{Y}_{1,1}$} & \red{$\widetilde{Y}_{2,1}$} & \red{$\widetilde{Y}_{3,1}$} & \red{$\widetilde{Y}_{4,1}$} & \red{$\widetilde{Y}_{5,1}$} \\ \hline
Sub-response 2 & \red{$\widetilde{Y}_{1,2}$} & \red{$\widetilde{Y}_{2,2}$} & $\widetilde{Y}_{3,2}$ & $\widetilde{Y}_{4,2}$ & \red{$\widetilde{Y}_{5,2}$} \\ \hline\hline
& Worker $6$ & Worker $7$ & Worker $8$ & Worker $9$ & Worker $10$ \\ \hline
\multirow{2}{*}{Storage} & $f_1(\alpha_{6,1})\;\, f_2(\alpha_{6,1})$ & $f_1(\alpha_{7,1})\;\, f_2(\alpha_{7,1})$ & $f_1(\alpha_{8,1})\;\, f_2(\alpha_{8,1})$ & $f_1(\alpha_{9,1})\;\, f_2(\alpha_{9,1})$ & $f_1(\alpha_{10,1})\;\, f_2(\alpha_{10,1})$ \\ \Xcline{2-6}{0.5pt}
& $f_1(\alpha_{6,2})\;\, f_2(\alpha_{6,2})$ & $f_1(\alpha_{7,2})\;\, f_2(\alpha_{7,2})$ & $f_1(\alpha_{8,2})\;\, f_2(\alpha_{8,2})$ & $f_1(\alpha_{9,2})\;\, f_2(\alpha_{9,2})$ & $f_1(\alpha_{10,2})\;\, f_2(\alpha_{10,2})$ \\ \hline
Sub-response 1 & \red{$\widetilde{Y}_{6,1}$} & \red{$\widetilde{Y}_{7,1}$} & \red{$\widetilde{Y}_{8,1}$} & \red{$\widetilde{Y}_{9,1}$} & $\widetilde{Y}_{10,1}$ \\ \hline
Sub-response 2 & $\widetilde{Y}_{6,2}$ & $\widetilde{Y}_{7,2}$ & \red{$\widetilde{Y}_{8,2}$} & $\widetilde{Y}_{9,2}$ & $\widetilde{Y}_{10,2}$ \\ \hline
  \end{tabular}
\end{table*}


\subsection{Privacy, Communication and Computation Overheads}
In this subsection, we prove the privacy of the proposed GLCC codes and analyze their communication cost and computation complexity. 

\begin{Lemma}[Generalized Cauchy Matrix \cite{Lin}]\label{g-cauchy matrix}
Let $\alpha_1,\!\ldots\!,\alpha_{\tau}$ and $\beta_1,\!\ldots\!,\beta_{\tau}$ be pairwise distinct elements from $\mathbb{F}_q$ for any positive integer $\tau$, and $v_1,\!\ldots\!,v_{\tau},s_1,\!\ldots\!,s_{\tau}$ be $2\tau$ nonzero elements from $\mathbb{F}_q$. Denote by $f_k(x)$ a polynomial of degree $\tau\!-\!1$
\begin{IEEEeqnarray}{c}\notag
f_k(x)=\prod\limits_{j\in[\tau]\backslash\{k\}}\frac{x-\beta_j}{\beta_k-\beta_j},\quad\forall\,k\in[\tau].
\end{IEEEeqnarray}
Then the following generalized Cauchy matrix is invertible over $\mathbb{F}_q$.
\begin{IEEEeqnarray}{c}\notag
\left[
  \begin{array}{@{\,}cccc@{\,}}
    v_1 f_{1}(\alpha_1)s_1 & v_1 f_{2}(\alpha_1)s_2& \ldots & v_{1} f_{\tau}(\alpha_1)s_{\tau}  \\
    v_2 f_{1}(\alpha_2)s_1 & v_2 f_{2}(\alpha_2)s_2& \ldots & v_{2} f_{\tau}(\alpha_2)s_{\tau}  \\
    \vdots & \vdots & \ddots & \vdots \\
    v_{\tau} f_{1}(\alpha_{\tau})s_1 & v_{\tau} f_{2}(\alpha_{\tau})s_2& \ldots & v_{\tau} f_{\tau}(\alpha_{\tau})s_{\tau}  \\
\end{array}
\right]_{\tau\times \tau}.
\end{IEEEeqnarray}
\end{Lemma}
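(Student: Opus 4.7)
The plan is to show invertibility by reducing the given matrix to a standard Cauchy-type form whose determinant is classically known to be nonzero. First, I would absorb the nonzero column scalings: since each $v_k \neq 0$, the matrix in question factors as $M \cdot \mathrm{diag}(v_1,\ldots,v_\tau)$, where $M = (f_j(\alpha_i))_{i,j}$ and the diagonal matrix is clearly invertible. It therefore suffices to prove that $M$ itself is invertible.

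Next, I would rewrite each Lagrange basis polynomial in the closed form $f_k(x) = P(x)/\bigl((x-\beta_k)P'(\beta_k)\bigr)$, where $P(x) = \prod_{j=1}^{\tau}(x-\beta_j)$ and $P'(\beta_k) = \prod_{j \neq k}(\beta_k - \beta_j)$. Substituting into the entries of $M$ gives the factorization
\begin{equation*}
M \;=\; \mathrm{diag}\bigl(P(\alpha_1),\ldots,P(\alpha_\tau)\bigr)\,\cdot\, C \,\cdot\, \mathrm{diag}\bigl(1/P'(\beta_1),\ldots,1/P'(\beta_\tau)\bigr),
\end{equation*}
where $C_{ij} = 1/(\alpha_i-\beta_j)$ is the ordinary Cauchy matrix. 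Under the hypothesis that all $2\tau$ elements $\alpha_i,\beta_j$ are pairwise distinct, both diagonal factors are nonsingular ($P(\alpha_i)\neq 0$ because $\alpha_i\neq\beta_j$, and $P'(\beta_k)\neq 0$ because the $\beta_j$'s are distinct), and the classical Cauchy determinant identity
\begin{equation*}
\det C \;=\; \frac{\prod_{i<j}(\alpha_j-\alpha_i)(\beta_i-\beta_j)}{\prod_{i,j}(\alpha_i-\beta_j)}
\end{equation*}
is nonzero. Multiplying the three determinants together yields $\det M \neq 0$, and hence the original matrix is invertible.

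A cleaner coordinate-free alternative, which I would mention as a cross-check, is to observe that $\{f_1,\ldots,f_\tau\}$ is a basis of the space $\mathbb{F}_q[x]_{<\tau}$ of polynomials of degree less than $\tau$ (which uses only the distinctness of the $\beta_j$'s), and that the evaluation map $p \mapsto (p(\alpha_1),\ldots,p(\alpha_\tau))$ from $\mathbb{F}_q[x]_{<\tau}$ to $\mathbb{F}_q^{\tau}$ is injective because a nonzero polynomial of degree less than $\tau$ cannot have $\tau$ roots (this uses only the distinctness of the $\alpha_i$'s). The matrix $M$ is precisely the matrix of this evaluation map expressed in the Lagrange basis on the source and the standard basis on the target, so its invertibility follows from the rank-nullity theorem.

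I do not anticipate a genuine obstacle here, as this is a standard fact from linear algebra and the proof is essentially computational once the Cauchy factorization is written down. The only small care needed is to verify that the factorization into diagonal$\,\times\,$Cauchy$\,\times\,$diagonal is well defined, which is exactly why the hypothesis requires $\alpha_i\neq\beta_j$ for every pair $(i,j)$; the second approach via the evaluation map bypasses this concern entirely and would be my fallback presentation if the ambient notation needs to be minimized.
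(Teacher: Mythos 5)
Your proof is correct. The paper does not supply an argument for this lemma --- it simply cites \cite{Lin} --- so there is no in-paper proof to compare against; your write-up is a valid self-contained substitute. Both of your routes work: the factorization $M = \mathrm{diag}\bigl(P(\alpha_i)\bigr)\cdot\bigl(1/(\alpha_i-\beta_j)\bigr)_{i,j}\cdot\mathrm{diag}\bigl(1/P'(\beta_j)\bigr)$ with $P(x)=\prod_{j}(x-\beta_j)$ is the standard computation and immediately reduces to the classical Cauchy determinant, while the evaluation-map argument (Lagrange basis on the source, evaluation at $\tau$ distinct points is injective on $\mathbb{F}_q[x]_{<\tau}$) is the cleaner coordinate-free version. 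One observation worth retaining: the evaluation-map route actually proves the lemma under the weaker hypothesis that the $\alpha_i$ are distinct among themselves and the $\beta_j$ are distinct among themselves, without requiring $\alpha_i \neq \beta_j$ (if $\alpha_i = \beta_j$, the $i$-th row of $M$ is simply a standard basis vector and nothing breaks), whereas the Cauchy factorization genuinely needs $\alpha_i \neq \beta_j$ for its entries to be defined. In this paper the lemma is only ever invoked under the all-pairwise-distinct hypothesis (indeed, the row scalings $\prod_{k\in[R]}(\alpha_{n,\ell}-\beta_{g,k})$ that appear when reducing $\mathbf{F}_g^{\mathcal{T}}$ to the lemma's form already require $\alpha_{n,\ell}\neq\beta_{g,k}$), so either argument suffices for the privacy proof.
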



\begin{Lemma}[Corollaries 10.8 and 10.12 in \cite{Von}]\label{Lamma:complexity11}
The evaluation of a $k$-th degree polynomial at $k+1$ arbitrary points can be done in ${\mathcal{O}}(k(\log k)^2\log\log k)$ arithmetic operations, and consequently, its dual problem, interpolation of a $k$-th degree polynomial from $k+1$ arbitrary points can also be performed in ${\mathcal{O}}(k(\log k)^2\log\log k)$ arithmetic operations.
\end{Lemma}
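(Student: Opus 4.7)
The plan is to prove both halves via the classical \emph{subproduct tree} construction, which is the standard route for fast multipoint evaluation and interpolation. Let $\alpha_1,\ldots,\alpha_{k+1}$ be the evaluation (or interpolation) points, and let $M(t)$ denote the cost of multiplying two univariate polynomials of degree at most $t$. Using FFT-based multiplication one has $M(t)=\mathcal{O}(t\log t\log\log t)$; this is the only nontrivial primitive I would invoke.

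First I would handle multipoint evaluation. Build a balanced binary tree whose $k+1$ leaves store the linear factors $x-\alpha_i$ and whose every internal node stores the product of its two children. The root holds $P(x)=\prod_{i=1}^{k+1}(x-\alpha_i)$, and the sum of polynomial degrees at each level is exactly $k+1$. Constructing the tree bottom-up costs $\mathcal{O}(M(k))$ per level and $\mathcal{O}(\log k)$ levels, i.e.\ $\mathcal{O}(M(k)\log k)$. Given the tree, compute $f\bmod P_v$ recursively top-down: starting from $f$ at the root, at each node use fast Euclidean division to reduce the residue modulo each of the two children's products. At a leaf $x-\alpha_i$, the residue is the scalar $f(\alpha_i)$. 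Fast division of degree-$\mathcal{O}(k)$ polynomials runs in $\mathcal{O}(M(k))$ time per operation, the degree sum at each level is $\mathcal{O}(k)$, and there are $\mathcal{O}(\log k)$ levels, yielding the claimed $\mathcal{O}(k(\log k)^2\log\log k)$ bound.

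Next, for interpolation from the data $(\alpha_i,y_i)$, I would use the barycentric form of Lagrange's formula
\[
f(x)=\sum_{i=1}^{k+1} y_i\,\frac{P(x)/(x-\alpha_i)}{P'(\alpha_i)}.
\]
Compute $P'(x)$ by formal differentiation in $\mathcal{O}(k)$ operations, then reuse the multipoint-evaluation subroutine above to obtain all $w_i:=P'(\alpha_i)$ within the same $\mathcal{O}(k(\log k)^2\log\log k)$ budget, and set $c_i:=y_i/w_i$. The remaining task is to assemble $\sum_i c_i\prod_{j\neq i}(x-\alpha_j)$. Do this on the same subproduct tree by a bottom-up \emph{combination} step: at each internal node with children holding $(P_L,f_L)$ and $(P_R,f_R)$, store $(P_LP_R,\,f_LP_R+f_RP_L)$. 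Induction on the tree shows that the polynomial at the root is precisely the desired interpolant, and the same $\mathcal{O}(M(k))$-per-level, $\mathcal{O}(\log k)$-levels accounting gives the bound.

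The only real obstacle is the arithmetic bookkeeping at each level: one must verify that the sum of degrees of the residues (or partial combinations) stays $\mathcal{O}(k)$ across the entire level, so that the per-level cost is genuinely a single $\mathcal{O}(M(k))$ and not something larger after summing many subproblems. The standard argument is that a balanced binary tree on $k+1$ leaves has total degree $k+1$ at each level, and fast division/multiplication is superadditive in $t$, so breaking a problem of size $k$ into two of size $k/2$ loses only a constant factor per level; iterating across the $\mathcal{O}(\log k)$ levels yields the stated complexity. Once that bookkeeping is in place, everything else is routine.
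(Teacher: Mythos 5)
The paper does not prove this lemma; it imports it directly from the cited reference \cite{Von} (von zur Gathen and Gerhard's \emph{Modern Computer Algebra}), so there is no in-paper proof to compare against. Your subproduct-tree argument — build the tree of products of linear factors in $\mathcal{O}(M(k)\log k)$, evaluate by top-down fast remaindering, interpolate by computing the barycentric weights $P'(\alpha_i)$ via the same multipoint-evaluation routine and then combining bottom-up with $(P_LP_R,\,f_LP_R+f_RP_L)$ — is exactly the textbook proof behind that citation, and your per-level bookkeeping (degree sum $\mathcal{O}(k)$ per level, superadditivity of $M$, $\mathcal{O}(\log k)$ levels, $M(k)=\mathcal{O}(k\log k\log\log k)$) is the right way to land on $\mathcal{O}(k(\log k)^2\log\log k)$. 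The argument is correct and complete; it faithfully reconstructs the result the paper takes as a black box.
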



\begin{Lemma}[Decoding Reed-Solomon Codes \cite{Gao,chen2008complexity,von2013modern,van2022optimizing}]\label{Lamma:complexity33}
Decoding Reed-Solomon codes of dimension $n$ with $a$ erroneous symbols over arbitrary finite fields can be done in $\mathcal{O}(n(\log n)^2\log\log n)$ arithmetic operations if its minimum distance $d$ satisfies $d>2a$.\footnote{The Half Greatest Common Divisor (Half-GCD) algorithm \cite{van2022optimizing} serves as a fast and efficient method for decoding RS codes. While RS codes can achieve a decoding complexity of $\mathcal{O}(n(\log n)^2\log\log n)$ for arbitrary finite fields by applying the Half-GCD algorithm outlined in Algorithm 11.6 in \cite{von2013modern} to the RS decoding algorithm in \cite{Gao}, it introduces a relatively large constant factor of 22. Depending on specific decoding scenarios, such as whether the finite field used supports Fast Fourier Transform (FFT), it is advisable to choose an appropriate Half-GCD algorithm  \cite{van2022optimizing} to mitigate the impact of the constant factor.
}
\end{Lemma}

\subsubsection*{Privacy} We demonstrate that GLCC codes provide a privacy guarantee of data inputs against any $T$ colluding workers. At a high level, the data inputs of each group are protected using $LT$ random noises, ensuring that the $LT$ encoded data received by any $T$ workers does not disclose any information about the group of data inputs. 
Moreover, since the random noises used to mask data inputs are independent across all the groups, the combination of the encoded data obtained from all the groups remains unknown to all data inputs. Notably, each worker is shared with $L$ encoded data for the data inputs of each group, and thus GLCC utilizes $LT$ random noises to protect data privacy against $T$ colluding workers. Once the number of random noises used is reduced, then any $T$ colluding workers will acquire at least a linear combination of the group of data inputs from the received $LT$ encoding data by \eqref{encdoing function}, thereby violating privacy constraint.

We next present a formal description. 
Let $\mathcal{T}\!=\!\{n_1,\!\ldots\!,n_{T}\}$ be any $T$ indices out of the $N$ workers.
From \eqref{encdoing function} and \eqref{GLCC:upload}, the $LT$ encoded data $\{f_g(\alpha_{n,\ell}):n\in\mathcal{T},\ell\in[L]\}$ sent to the workers
in $\mathcal{T}$ are protected by $LT$ random noises $Z_{g,R+1},\ldots,Z_{g, R+LT}$ for each $g\in[G]$, given by 
\begin{IEEEeqnarray}{l}
\left[
\begin{array}{@{}c@{}}
f_g(\alpha_{n_1,1}) \\ \vdots \\ f_g(\alpha_{n_1,L}) \\ \vdots \\ f_g(\alpha_{n_T,1}) \\ \vdots \\ f_g(\alpha_{n_T,L})
\end{array}
\right]=\underbrace{
\left[
\begin{array}{@{}c@{}}
u_g(\alpha_{n_1,1}) \\ \vdots \\ u_g(\alpha_{n_1,L}) \\ \vdots \\ u_g(\alpha_{n_T,1}) \\ \vdots \\ u_g(\alpha_{n_T,L})
\end{array}
\right]}_{=\mathbf{u}_g^{\mathcal{T}}} + \notag\\
\underbrace{
\left[
\begin{array}{@{}ccc@{}}
  b_{g,R+1}(\alpha_{n_1,1}) & \ldots & b_{g,R+LT}(\alpha_{n_1,1}) \\
  \vdots & \vdots & \vdots \\
b_{g,R+1}(\alpha_{n_1,L}) & \ldots & b_{g,R+LT}(\alpha_{n_1,L}) \\
\vdots & \vdots & \vdots \\
b_{g,R+1}(\alpha_{n_T,1}) & \ldots & b_{g,R+LT}(\alpha_{n_T,1}) \\
\vdots & \vdots & \vdots \\
b_{g,R+1}(\alpha_{n_T,L}) & \ldots & b_{g,R+LT}(\alpha_{n_T,L}) \\
\end{array}
\right]}_{=\mathbf{F}_{g}^{\mathcal{T}}}
\underbrace{
\left[
\begin{array}{@{}c@{}}
  Z_{g,R+1} \\
  \vdots \\
  Z_{g,R+LT}
\end{array}
\right]}_{=\mathbf{z}_g}, \IEEEeqnarraynumspace \label{answers:form}
\end{IEEEeqnarray}
where for all $g\in[G]$,
\begin{IEEEeqnarray}{c}\label{security:proof2}
u_g(x)\!=\!\sum_{j=1}^{ R}X_{g,j}\cdot\prod_{k\in[R+LT]\backslash\{j\}}\frac{x-\beta_{g,k}}{\beta_{g,j}-\beta_{g,k}}, \IEEEeqnarraynumspace
\end{IEEEeqnarray}
and for all $j\in[R+1:R+LT]$,
\begin{IEEEeqnarray}{c}\notag
b_{g,j}(x)=\prod_{k\in[R]}\frac{x-\beta_{g,k}}{\beta_{g,j}-\beta_{g,k}}\cdot\prod_{k\in[R+1:R+LT]\backslash\{j\}}\frac{x-\beta_{g,k}}{\beta_{g,j}-\beta_{g,k}}. \label{security:proof}
\end{IEEEeqnarray}

It is easy to prove that the matrix $\mathbf{F}_{g}^{\mathcal{T}}$ is non-singular over $ \mathbb{F}_q$ for any $\mathcal{T}$ and $g\in[G]$ by Lemma \ref{g-cauchy matrix} and P1-P4. Let $(\mathbf{F}_{g}^{\mathcal{T}})^{-1}$ denote the inverse matrix of $\mathbf{F}_{g}^{\mathcal{T}}$, then we have 
\begin{IEEEeqnarray*}{rCl}
&&I(X;\widetilde{X}_{\mathcal{T}})\notag\\
&\overset{(a)}{=}&I(\{X_{g,1},\!\ldots\!,X_{g, R}\}_{g\in[G]};\{f_g(\alpha_{n,\ell}):n\!\in\!\mathcal{T},\ell\!\in\![L]\}_{g\in[G]})\\
&\overset{(b)}{=}&I(\{X_{g,1},\ldots,X_{g,R}\}_{g\in[G]};\{\mathbf{u}_g^{\mathcal{T}}+\mathbf{F}_g^{\mathcal{T}}\cdot\mathbf{z}_g\}_{g\in[G]})\\
&=&I(\{X_{g,1},\ldots,X_{g,R}\}_{g\in[G]};\{(\mathbf{F}_{g}^{\mathcal{T}})^{-1}\cdot\mathbf{u}_g^{\mathcal{T}}+\mathbf{z}_g\}_{g\in[G]})\\
&=&H(\{(\mathbf{F}_{g}^{\mathcal{T}})^{-1}\cdot\mathbf{u}_g^{\mathcal{T}}+\mathbf{z}_g\}_{g\in[G]})\notag\\
&&\;\;-H(\{(\mathbf{F}_{g}^{\mathcal{T}})^{-1}\cdot\mathbf{u}_g^{\mathcal{T}}+\mathbf{z}_g\}_{g\in[G]}|\{X_{g,1},\ldots,X_{g,R}\}_{g\in[G]})\\
&\overset{(c)}{=}&H(\{(\mathbf{F}_{g}^{\mathcal{T}})^{-1}\cdot\mathbf{u}_g^{\mathcal{T}}+\mathbf{z}_g\}_{g\in[G]})\notag\\
&&\quad\quad\quad\quad\quad-H(\{\mathbf{z}_g\}_{g\in[G]}|\{X_{g,1},\ldots,X_{g,R}\}_{g\in[G]})\\
&\overset{(d)}{=}&H(\{(\mathbf{F}_{g}^{\mathcal{T}})^{-1}\cdot\mathbf{u}_g^{\mathcal{T}}+\mathbf{z}_g\}_{g\in[G]})-H(\{\mathbf{z}_g\}_{g\in[G]})\\
&\overset{(e)}{=}&0,
\end{IEEEeqnarray*}
where $(a)$ is due to \eqref{GLCC:group} and \eqref{GLCC:upload}; 
$(b)$ follows by \eqref{answers:form};
$(c)$ holds because $\{(\mathbf{F}_{g}^{\mathcal{T}})^{-1}\cdot\mathbf{u}_g^{\mathcal{T}}\}_{g\in[G]}$ is a deterministic function of $\{X_{g,1},\ldots,X_{g,R}\}_{g\in[G]}$ by \eqref{answers:form}-\eqref{security:proof2};
$(d)$ is because $\{\mathbf{z}_g\}_{g\in[G]}$ are generated independently of $\{X_{g,1},\ldots,X_{g,R}\}_{g\in[G]}$;
$(e)$ is due to the fact that all the noises $\{Z_{g,R+1},\ldots,Z_{g,R+LT}\}_{g\in[G]}$ in $\{\mathbf{z}_g\}_{g\in[G]}$ are i.i.d. uniformly distributed on $\mathbb{F}_q^U$, and are independent of $\{(\mathbf{F}_{g}^{\mathcal{T}})^{-1}\cdot\mathbf{u}_g^{\mathcal{T}}\}_{g\in[G]}$, such that $\{(\mathbf{F}_{g}^{\mathcal{T}})^{-1}\cdot\mathbf{u}_g^{\mathcal{T}}+\mathbf{z}_g\}_{g\in[G]}$ and $\{\mathbf{z}_g\}_{g\in[G]}$ are identically and uniformly distributed over $\mathbb{F}_q^{LTGU}$, i.e., $H(\{(\mathbf{F}_{g}^{\mathcal{T}})^{-1}\cdot\mathbf{u}_g^{\mathcal{T}}+\mathbf{z}_g\}_{g\in[G]})=H(\{\mathbf{z}_g\}_{g\in[G]})=LTGU$.

This proves that the GLCC codes satisfy the privacy constraint in \eqref{LCC:privacy}.

\subsubsection*{Communication Cost}
The master shares $GL$ encoded inputs of each size $U$ to each worker by \eqref{GLCC:upload}, and downloads $L$ sub-responses of each size $V$ from each of responsive workers by \eqref{GLCC:download}.  Thus, from \eqref{upload and download}, the normalized upload  and download cost are given by $P_u=GLUN/U=GLN$ and $P_d=KLV/V=KL$, respectively.

\subsubsection*{Computation Complexity}The encoding process for the dataset can be viewed as evaluating $G$ polynomials of degree $ R+LT-1<NL$ at $NL$ points for $U$ times by \eqref{encdoing function} and \eqref{GLCC:upload}, which incurs the normalized complexity of $\mathcal{O}(GNL(\log(NL))^2\log\log(NL))$ by Lemma \ref{Lamma:complexity11}.
In terms of computation complexity at each worker, to generate a sub-response \eqref{GLCC:download}, the computation mainly includes evaluating the multivariate polynomial $\phi$ over $G$ data inputs, which incurs a normalized complexity of $G$.  Note that the terms $\prod_{g'\in[G]\backslash\{g\}}\prod_{r\in[R]}(\alpha_{n,\ell}-\beta_{g',r}),g\in[G]$ in \eqref{GLCC:download} are
independent of the dataset and thus can be computed at worker $n$ a priori to reduce the latency of worker computation, and thus its complexity is negligible. Since each worker generates $L$ sub-responses, the computation complexity at each worker is $\mathcal{O}(GL)$.
For decoding complexity, the master decodes a polynomial \eqref{GLCC:answer fun} of degree less than $KL$ from an $NL$-dimensional RS codeword with at most $AL$ errors and then evaluates the polynomial at $GR=M<KL$ points \eqref{GLCC:evaluations}, for $V$ times. This achieves the normalized complexity of $\mathcal{O}(NL(\log (NL))^2\log\log (NL))$ by Lemmas \ref{Lamma:complexity33} and \ref{Lamma:complexity11}.
Remarkably, similar to worker computation complexity, the terms $\prod_{g'\in[G]\backslash\{g\}}\prod_{k\in[R]}(\beta_{g,r}-\beta_{g',k}),g\in[G],r\in[R]$ in \eqref{GLCC:evaluations} are constant and can be pre-computed at the master, and thus the complexity of eliminating these terms is also negligible.

\section{Application to Distributed Model Training and Experiments}\label{experiments}
In this section, we present a practical application of GLCC codes in distributedly training a single-layer perceptron while keeping the training data private. We experimentally demonstrate the performance gain of GLCC in speeding up the training process in the presence of stragglers. 

\subsection{Applying GLCC Codes to Train Perceptron}
A single-layer perceptron is 
a neural network represented by a forward function 
\begin{IEEEeqnarray*}{c}
y=\left\{
\begin{array}{@{}ll}
1,&\text{if}\; \sigma(\mathbf{x}^{\mathrm{T}}\mathbf{w})=(\mathbf{x}^{\mathrm{T}}\mathbf{w})^2> 0.5\\
0,&\text{otherwise}
\end{array}\right.
\end{IEEEeqnarray*} 
mapping an input $\mathbf{x}\in\mathbb{R}^d$ to a binary output $y\in\{0,1\}$, where $\mathbf{w}\in\mathbb{R}^d$ is a vector of model parameters (or weights), and $\sigma(x)$ is an activation function, and is selected as quadratic polynomial function, i.e., $\sigma(x)=x^2$. 

Given a training dataset denoted by a matrix $\mathbf{X}\in\mathbb{R}^{s \times d}$ consisting of $s$ data points with $d$ features, and a corresponding label vector $\mathbf{y}\in\{0,1\}^{s}$, the goal 
is to train the model parameters $\mathbf{w}$ by minimizing the following mean-squared error (MSE) loss function,
\begin{IEEEeqnarray}{c}\notag
C(\mathbf{w}) = \frac{1}{m} \big\Vert \left(\mathbf{X}\mathbf{w}\right)^2-\mathbf{y} \big\Vert_{2}^2,
\end{IEEEeqnarray}
where $(\cdot)^2$ operates element-wise over the vector given by $\mathbf{X}\mathbf{w}$. 


Training is performed by iterating a gradient decent rule that moves the model along the negative gradient direction. The gradient of the loss function $C(\mathbf{w})$ is given by 
\begin{IEEEeqnarray}{c}
\nabla C(\mathbf{w}) =
\frac{4}{m}\left[\mathbf{X}^{\mathrm{T}} (\mathbf{X} \mathbf{w})^3 - \mathbf{X}^{\mathrm{T}} (\mathbf{X} \mathbf{w} \circ \mathbf{y})\right], \label{gradient:func} 
\end{IEEEeqnarray}
where $\circ$ denotes Hadamard multiplication.
Let $\mathbf{w}^{(t)}$ denote the model at iteration $t$ and $\eta$ be the learning rate. Then the model is updated as
\begin{IEEEeqnarray}{c}\label{Gradient:updata22}
\mathbf{w}^{(t+1)}\!=\!\mathbf{w}^{(t)}\!-\!
\frac{4\eta}{m}\left[\mathbf{X}^{\mathrm{T}} (\mathbf{X} \mathbf{w}^{(t)})^3 \!-\! \mathbf{X}^{\mathrm{T}} (\mathbf{X} \mathbf{w}^{(t)} \circ \mathbf{y})\right].\IEEEeqnarraynumspace
\end{IEEEeqnarray}

To demonstrate the performance of the proposed GLCC codes, we consider simultaneously training $M$ binary classifiers over the $M$ datasets $(\mathbf{X}_1,\mathbf{y}_1),\ldots,(\mathbf{X}_M,\mathbf{y}_M)$, using the single-layer perceptron described above. 
Training is performed by offloading computationally-intensive operations to a distributed computing system with $N$ workers.  
In doing this, the master wishes to protect the data $(\mathbf{X}_m,\mathbf{y}_m,\mathbf{w}_m^{(t)})_{m\in[M]}$ private from any $T$ colluding workers,\footnote{Similar to \cite{LCC,so2021codedprivateml,so2020scalable}, we do not consider the case of adversarial workers in our experiments for simplifying the comparison.} where $\mathbf{w}_m^{(t)}$ is the model for the dataset $(\mathbf{X}_m,\mathbf{y}_m)$ at iteration $t$.\footnote{It is necessary to provide a privacy guarantee for the model parameters because references  \cite{zhu2019deep,wang2019beyond,geiping2020inverting} have shown that the reconstruction of the training samples from the model is possible by using a model inversion attack.}

For the training problem, the computationally-intensive operations correspond to the gradient computations in \eqref{gradient:func}. 
GLCC codes are able to deal with the gradient computation step while protecting the privacy of the data, i.e., evaluating a multivariate polynomial $\phi$ at the $M$ points $(\mathbf{X}_m,\mathbf{y}_m,\mathbf{w}_m^{(t)})_{m\in[M]}$ for the $t$-th iteration, where function $\phi$ is given by 
\begin{IEEEeqnarray}{c}\label{exp:polynomial}
    \phi(\mathbf{X,y,w})=\mathbf{X}^{\mathrm{T}} (\mathbf{X} \mathbf{w})^3 - \mathbf{X}^{\mathrm{T}} (\mathbf{X} \mathbf{w} \circ \mathbf{y}).
\end{IEEEeqnarray}
However, one obstacle to applying GLCC codes is that they are designed for computations over a finite field, while the dataset $\mathbf{X}_m$ and model $\mathbf{w}^{(t)}_m$ are distributed over the domain of real number.
Similar to \cite{so2021codedprivateml,so2020scalable,so2020byzantine}, our solution is to quantize the dataset and model from the real number domain to the domain of integers and then embed it in the finite field $\mathbb{F}_q$ with prime $q$.

Define a quantization function as
\begin{IEEEeqnarray}{c}\label{fun:quan}
\textit{Q}(x;l)=\left\{
\begin{array}{@{}ll}
\lfloor2^{l}\cdot x\rceil,& \text{if}\; x\geq 0\\
q+\lfloor 2^{l}\cdot x\rceil,& \text{if}\; x< 0
\end{array}\right.,
\end{IEEEeqnarray}
where  $l$ is a positive integer parameter that controls the quantization precision, and $\lfloor\cdot\rceil$ is a function of rounding half up.
Then the dataset $\mathbf{X}_m$ and the model $\mathbf{w}_m^{(t)}$ are quantized as
\begin{IEEEeqnarray}{c}\label{quantization:E}
\overline{\mathbf{X}}_m=\textit{Q}(\mathbf{X}_m;l_x), \;\;
\overline{\mathbf{w}}_m^{(t)}=\textit{Q}(\mathbf{w}_m^{(t)};l_w), \;\; \forall\, m\in[M],\IEEEeqnarraynumspace
\end{IEEEeqnarray}  
where function $\textit{Q}$ is carried out element-wise, and $l_x$ and $l_w$ are the quantization parameters of  $\mathbf{X}_m$ and  $\mathbf{w}_m^{(t)}$, respectively.
Notably, given the quantized versions of $\mathbf{X}_m$ and $\mathbf{w}_m^{(t)}$ \eqref{quantization:E}, the terms $\overline{\mathbf{X}}^{\mathrm{T}}_m(\overline{\mathbf{X}}_m\overline{\mathbf{w}}^{(t)}_m)^3$ and $\overline{\mathbf{X}}^{\mathrm{T}}_m(\overline{\mathbf{X}}_m\overline{\mathbf{w}}^{(t)}_m\circ{\mathbf{y}}_m)$ of the polynomial function \eqref{exp:polynomial} have distinct quantization precision, which will result in large quantization loss when using some specific dequantization function to dequantize the difference between the two terms (i.e., $\overline{\mathbf{X}}^{\mathrm{T}}_m(\overline{\mathbf{X}}_m\overline{\mathbf{w}}^{(t)}_m)^3- \overline{\mathbf{X}}^{\mathrm{T}}_m(\overline{\mathbf{X}}_m\overline{\mathbf{w}}^{(t)}_m\circ{\mathbf{y}}_m)$). 
To resolve this issue, we further quantize the label  $\mathbf{y}_m$ as   
\begin{IEEEeqnarray}{c}\label{quantization:y}
\overline{\mathbf{y}}_m=\textit{Q}(\mathbf{y}_m;l_y), \quad \forall\, m\in[M]
\end{IEEEeqnarray}
with $l_y=2l_x+2l_w$, such that the two terms $\overline{\mathbf{X}}^{\mathrm{T}}_m(\overline{\mathbf{X}}_m\overline{\mathbf{w}}^{(t)}_m)^3$ and $\overline{\mathbf{X}}^{\mathrm{T}}_m(\overline{\mathbf{X}}_m\overline{\mathbf{w}}^{(t)}_m\circ\overline{\mathbf{y}}_m)$ have same quantization precision of $4l_x+3l_w$.

Now, GLCC codes can be employed to compute the evaluations of function  $\phi$ at the $M$ quantized points $X_1=(\overline{\mathbf{X}}_1,\overline{\mathbf{y}}_1,\overline{\mathbf{w}}_1^{(t)}),\ldots,X_M=(\overline{\mathbf{X}}_M,\overline{\mathbf{y}}_M,\overline{\mathbf{w}}_M^{(t)})$ for the $t$-th iteration. Remarkably, the data $(\overline{\mathbf{X}}_m,\overline{\mathbf{y}}_m)_{m\in[M]}$ are independent of the iteration index $t$, and thus they need to be shared at the distributed computing system only once before the training starts. Since $\phi$ is a  multivariate polynomial of degree $D=7$ in variables $(\mathbf{X,y,w})$ by \eqref{exp:polynomial}, the master can recover the desired gradient computation
\begin{IEEEeqnarray}{c}
\phi(\overline{\mathbf{X}}_m,\overline{\mathbf{y}}_m,\overline{\mathbf{w}}_m^{(t)})\!=\!\overline{\mathbf{X}}^{\mathrm{T}}_m\!\!\left(\overline{\mathbf{X}}_m\overline{\mathbf{w}}^{(t)}_m\!\right)^3\!-\! \overline{\mathbf{X}}^{\mathrm{T}}_m\!\!\left(\overline{\mathbf{X}}_m\overline{\mathbf{w}}^{(t)}_m\!\circ\!\overline{\mathbf{y}}_m\!\right) \IEEEeqnarraynumspace \label{Gra:computation}
\end{IEEEeqnarray}
for all $m\in[M]$, as long as it receives the responses from any $K=\big\lceil\frac{6(M-G)+GM}{GL}+7T\big\rceil$ workers for any $G,L$ with $G|M$ by Theorem \ref{GLCC:theorem}. 

Let $\textit{Q}^{-1}(x)$ be a dequantization function that converts the input $x$ from the finite field to the real domain, given by 
\begin{IEEEeqnarray}{c}\label{fun:dequan}
\textit{Q}^{-1}(x)=\left\{
\begin{array}{@{}ll}
2^{-l}\cdot x,& \text{if}\; 0\leq x<\frac{q-1}{2} \\
2^{-l}\cdot(x-q),& \text{if}\; \frac{q-1}{2}\leq x<q
\end{array}\right.,
\end{IEEEeqnarray}
where $l=4l_x+3l_w$, and the input $x$ is an element on the finite field $\mathbb{F}_q$. 

Let $\widehat{\mathbf{X}}_m$ and $\widehat{\mathbf{w}}_m^{(t)}$ represent the dequantization versions of the quantized dataset $\overline{\mathbf{X}}_m$ and  model $\overline{\mathbf{w}}_m^{(t)}$, respectively, i.e.,
\begin{IEEEeqnarray}{c}\label{qun:loss}
    \widehat{\mathbf{X}}_m=2^{-l_x}\big\lfloor2^{l_x}\mathbf{X}_m\big\rceil,\quad\quad
    \widehat{\mathbf{w}}_m^{(t)}=2^{-l_w}\big\lfloor2^{l_w}\mathbf{w}_m^{(t)}\big\rceil. \IEEEeqnarraynumspace
\end{IEEEeqnarray}
The parameters $l_x$ and $l_w$ control the quantization losses $|{\mathbf{X}}_m\!-\!\widehat{\mathbf{X}}_m|$ and $|{\mathbf{w}}_m^{(t)}\!-\!\widehat{\mathbf{w}}_m^{(t)}|$ of the dataset and model, respectively. Larger values of $l_x$ and $l_w$ reduce the quantization losses.

Then the computational gradient  $\phi(\overline{\mathbf{X}}_m,\overline{\mathbf{y}}_m,\overline{\mathbf{w}}_m^{(t)})$  \eqref{Gra:computation} is dequantized from the finite field to the real domain, given by
\begin{IEEEeqnarray}{l}
\textit{Q}^{-1}\left(\overline{\mathbf{X}}^{\mathrm{T}}_m\left(\overline{\mathbf{X}}_m\overline{\mathbf{w}}^{(t)}_m\right)^3- \overline{\mathbf{X}}^{\mathrm{T}}_m\left(\overline{\mathbf{X}}_m\overline{\mathbf{w}}^{(t)}_m\circ\overline{\mathbf{y}}_m\right)\right)\notag\\
\quad\quad\quad\quad=\widehat{\mathbf{X}}^{\mathrm{T}}_m\left(\widehat{\mathbf{X}}_m\widehat{\mathbf{w}}^{(t)}_m\right)^3- \widehat{\mathbf{X}}^{\mathrm{T}}_m\left(\widehat{\mathbf{X}}_m\widehat{\mathbf{w}}^{(t)}_m\circ{\mathbf{y}}_m\right), \label{dequan:process}\IEEEeqnarraynumspace
\end{IEEEeqnarray}
which is proved in the following lemma.

\begin{Lemma}\label{size:field}
For the task of training a single-layer perceptron, GLCC codes can complete the desired gradient computations 
$\widehat{\mathbf{X}}^{\mathrm{T}}_m\big(\widehat{\mathbf{X}}_m\widehat{\mathbf{w}}^{(t)}_m\big)^3- \widehat{\mathbf{X}}^{\mathrm{T}}_m\big(\widehat{\mathbf{X}}_m\widehat{\mathbf{w}}^{(t)}_m\circ{\mathbf{y}}_m\big), m\in[M]$ with a certain level of quantization loss as long as the size $q$ of the finite field operated by GLCC codes is large enough such that
\begin{IEEEeqnarray}{l}
  \max\!\bigg(\bigg|  \big\lfloor2^{l_x}\mathbf{X}_m^{\mathrm{T}}\big\rceil\!\Big(\big\lfloor2^{l_x}\mathbf{X}_m^{\mathrm{T}}\big\rceil\!\cdot\!\big\lfloor2^{l_w}\mathbf{w}_m^{(t)}\big\rceil\Big)^{3}\!-\notag\\
\;\;\!\big\lfloor2^{l_x}\mathbf{X}_m^{\mathrm{T}}\big\rceil\!\Big(\big\lfloor2^{l_x}\mathbf{X}_m^{\mathrm{T}}\big\rceil\!\cdot\!\big\lfloor2^{l_w}\mathbf{w}_m^{(t)}\big\rceil\Big)\!\circ\!\big\lfloor2^{l_y}\mathbf{y}_m\big\rceil\bigg|\bigg)
  \!<\!\frac{q-1}{2},\IEEEeqnarraynumspace \label{round:error}
\end{IEEEeqnarray}
where all the computational operations are performed on integer domains; $|\cdot|$ and $\max(\cdot)$ denote taking the absolute value element-wise and the maximum entity of the given matrix, respectively. 
\end{Lemma}
\begin{proof}
For the dequantization function $\textit{Q}^{-1}(x)$ defined in \eqref{fun:dequan}, the input $x$ can be also negative number due to $x= q+x\;(\text{mod $q$})$ if $x<0$. Thus $\textit{Q}^{-1}(x)$ can be equivalently represented as 
\begin{IEEEeqnarray}{c}\label{sim:deq}
   \textit{Q}^{-1}(x)=2^{-l}\cdot x, \quad\quad\text{for}\;-\frac{q+1}{2}\leq x<\frac{q-1}{2}. 
\end{IEEEeqnarray}

Then for all $m\in[M]$, we have
\begin{IEEEeqnarray}{rCl}
&&\textit{Q}^{-1}\left(\overline{\mathbf{X}}^{\mathrm{T}}_m\left(\overline{\mathbf{X}}_m\overline{\mathbf{w}}^{(t)}_m\right)^3- \overline{\mathbf{X}}^{\mathrm{T}}_m\left(\overline{\mathbf{X}}_m\overline{\mathbf{w}}^{(t)}_m\circ\overline{\mathbf{y}}_m\right)\right) \notag\\
&\overset{(a)}{=}&\textit{Q}^{-1}\Big(\big\lfloor2^{l_x}\mathbf{X}_m^{\mathrm{T}}\big\rceil\left(\big\lfloor2^{l_x}\mathbf{X}_m\big\rceil\cdot\big\lfloor2^{l_w}\mathbf{w}_m^{(t)}\big\rceil\right)^{3}\notag\\
&&\quad\quad-\big\lfloor2^{l_x}\mathbf{X}_m^{\mathrm{T}}\big\rceil\left(\big\lfloor2^{l_x}\mathbf{X}_m\big\rceil\cdot\big\lfloor2^{l_w}\mathbf{w}_m^{(t)}\big\rceil\circ\big\lfloor2^{l_y}\mathbf{y}_m\big\rceil\right)\Big) \notag\\
&\overset{(b)}{=}&2^{-(4l_x+3l_w)}\left(\big\lfloor2^{l_x}\mathbf{X}_m^{\mathrm{T}}\big\rceil\left(\big\lfloor2^{l_x}\mathbf{X}_m\big\rceil\cdot\big\lfloor2^{l_w}\mathbf{w}_m^{(t)}\big\rceil\right)^{3} \right)\notag \\
&&\quad\quad\quad-2^{-(4l_x+3l_w)}\Big(\big\lfloor2^{l_x}\mathbf{X}_m^{\mathrm{T}}\big\rceil\bigg(\big\lfloor2^{l_x}\mathbf{X}_m\big\rceil\cdot\big\lfloor2^{l_w}\mathbf{w}_m^{(t)}\big\rceil\notag\\
&&\quad\quad\quad\quad\quad\quad\quad\quad\quad\quad\quad\quad\quad\circ\big\lfloor2^{2l_x+2l_w}\mathbf{y}_m\big\rceil\bigg)\Big) \notag \\
&=&2^{-l_x}\big\lfloor2^{l_x}\mathbf{X}_m^{\mathrm{T}}\big\rceil\left(2^{-l_x}\big\lfloor2^{l_x}\mathbf{X}_m\big\rceil\cdot2^{-l_w}\big\lfloor2^{l_w}\mathbf{w}_m^{(t)}\big\rceil\right)^{3}\notag\\
&&\quad\quad\quad-2^{-l_x}\big\lfloor2^{l_x}\mathbf{X}_m^{\mathrm{T}}\big\rceil\Big(2^{-l_x}\big\lfloor2^{l_x}\mathbf{X}_m\big\rceil\cdot2^{-l_w}\big\lfloor2^{l_w}\mathbf{w}_m^{(t)}\big\rceil\notag\\
&&\quad\quad\quad\quad\quad\quad\quad\quad\quad\quad\quad\circ2^{-(2l_x+2l_w)}\big\lfloor2^{2l_x+2l_w}\mathbf{y}_m\big\rceil\Big) \notag\\
&\overset{(c)}{=}&\widehat{\mathbf{X}}^{\mathrm{T}}_m\left(\widehat{\mathbf{X}}_m\widehat{\mathbf{w}}^{(t)}_m\right)^3- \widehat{\mathbf{X}}^{\mathrm{T}}_m\left(\widehat{\mathbf{X}}_m\widehat{\mathbf{w}}^{(t)}_m\circ{\mathbf{y}}_m\right),\notag
\end{IEEEeqnarray}
where $(a)$ is due to $\overline{\mathbf{X}}^{\mathrm{T}}_m\big(\overline{\mathbf{X}}_m\overline{\mathbf{w}}^{(t)}_m\big)^3\!-\! \overline{\mathbf{X}}^{\mathrm{T}}_m\big(\overline{\mathbf{X}}_m\overline{\mathbf{w}}^{(t)}_m\circ\overline{\mathbf{y}}_m\big)
\!=\!\big\lfloor2^{l_x}\mathbf{X}_m^{\mathrm{T}}\big\rceil\big(\big\lfloor2^{l_x}\mathbf{X}_m\big\rceil\cdot\big\lfloor2^{l_w}\mathbf{w}_m^{(t)}\big\rceil\big)^{3}-\big\lfloor2^{l_x}\mathbf{X}_m^{\mathrm{T}}\big\rceil\big(\big\lfloor2^{l_x}\mathbf{X}_m\big\rceil\cdot\big\lfloor2^{l_w}\mathbf{w}_m^{(t)}\big\rceil\circ\big\lfloor2^{l_y}\mathbf{y}_m\big\rceil\big)\;(\text{mod $q$})$ on the finite field $\mathbb{F}_q$ by \eqref{fun:quan}, \eqref{quantization:E} and \eqref{quantization:y}; $(b)$ follows by \eqref{round:error} and \eqref{sim:deq}; and $(c)$ is due to \eqref{qun:loss} and the fact that the label $\mathbf{y}_m$ has no quantization loss because its elements are over the domain of integers such that 
$2^{-(2l_x+2l_w)}\big\lfloor2^{2l_x+2l_w}\mathbf{y}_m\big\rceil=2^{-(2l_x+2l_w)}\cdot 2^{2l_x+2l_w}\mathbf{y}_m=\mathbf{y}_m$.  This completes the proof of the lemma.
\end{proof}

Finally, the current model $\overline{\mathbf{w}}^{(t)}_m,m\in[M]$ is updated via \eqref{Gradient:updata22} for the next iteration. 
\begin{IEEEeqnarray}{l}
\mathbf{w}^{(t+1)}_m\!=\!\mathbf{w}^{(t)}_m\!-\!
\frac{4\eta}{m}\!\bigg(\!\widehat{\mathbf{X}}^{\mathrm{T}}_m\!\Big(\!\widehat{\mathbf{X}}_m\widehat{\mathbf{w}}^{(t)}_m\!\Big)^3\!-\! \widehat{\mathbf{X}}^{\mathrm{T}}_m\!\Big(\!\widehat{\mathbf{X}}_m\widehat{\mathbf{w}}^{(t)}_m\circ{\mathbf{y}}_m\!\Big)\!\bigg).\notag\\
\label{Gradient:updata}
\end{IEEEeqnarray}
Obviously, quantization losses are the only concerns for the distributed gradient computations using GLCC codes, as long as \eqref{round:error} is satisfied. While larger values of the parameters $l_x$ and $l_w$ can effectively reduce the quantization losses, they also raise the risk of an overflow error caused by wrap-around in the finite field $\mathbb{F}_q$ (i.e., violation of the constraint in \eqref{round:error}). Therefore, the parameters $l_x$ and $l_w$ involve a trade-off between the quantization losses and the risk of overflow error, serving as  hyperparameters that need to be carefully selected.

To summarize, the overall procedures described above are outlined in Algorithm \ref{GLCC:application}.

\begin{algorithm}[t]
\caption{Applying GLCC Codes to Train Perceptron}
\label{GLCC:application}
\begin{algorithmic}[1] 
\REQUIRE Dataset $(\mathbf{X}_1,\mathbf{y}_1),\ldots,(\mathbf{X}_M,\mathbf{y}_M)$, design parameters $G,L$ with $G|M$, quantization parameters $l_x,l_w$, learning rate $\eta$, and number of iterations $J$.
\ENSURE Model parameters $\mathbf{w}^{(J)}_1,\ldots,\mathbf{w}^{(J)}_M$.
\STATE Compute the quantized versions $\overline{\mathbf{X}}_m$ and $\overline{\mathbf{y}}_m$ of data $\mathbf{X}_m$ and label $\mathbf{y}_m$ by \eqref{quantization:E} and \eqref{quantization:y}, respectively, for all $m\in[M]$.
\STATE Encode the data $X_1=(\overline{\mathbf{X}}_1,\overline{\mathbf{y}}_1),\ldots,X_M=(\overline{\mathbf{X}}_M,\overline{\mathbf{y}}_M)$ into  $\widetilde{X}_{n}^D$ according to \eqref{encdoing function} and \eqref{GLCC:upload}, and then send $\widetilde{X}_{n}^D$ to worker $n$ for all $n\in[N]$.
\STATE Initialize the models $\mathbf{w}^{(0)}_1,\ldots,\mathbf{w}^{(0)}_M$ randomly.
\FOR {iteration $t\in[0:J-1]$}
\STATE Master quantizes model $\mathbf{w}^{(t)}_m$ into $\overline{\mathbf{w}}^{(t)}_m$ for all $m\in[M]$ by \eqref{quantization:E}.
\STATE Master encodes the models $\overline{\mathbf{w}}^{(t)}_1,\ldots,\overline{\mathbf{w}}^{(t)}_M$ into  $\widetilde{w}_{n}^{(t)}$ according to \eqref{encdoing function} and \eqref{GLCC:upload}, and
then send $\widetilde{w}_{n}^{(t)}$ to worker $n$ for all $n\in[N]$.
\FOR {$n=1,2,\ldots,N$}\label{for:split}
\STATE Worker $n$ computes $\widetilde{Y}_n=(\widetilde{Y}_{n,1},\ldots,\widetilde{Y}_{n,L})$ over the received encoding data $\widetilde{X}_{n}=(\widetilde{X}_{n}^{D},\widetilde{w}_{n}^{(t)})$ according to \eqref{GLCC:download} and \eqref{exp:polynomial} and then send back to the master, where $\widetilde{X}_{n}=(\widetilde{X}_{n}^{D},\widetilde{w}_{n}^{(t)})$ is equivalent to the encoded version of the data $X_1=(\overline{\mathbf{X}}_1,\overline{\mathbf{y}}_1,\overline{\mathbf{w}}^{(t)}_1),\ldots,X_M=(\overline{\mathbf{X}}_M,\overline{\mathbf{y}}_M,\overline{\mathbf{w}}^{(t)}_M)$ according to \eqref{encdoing function} and \eqref{GLCC:upload}.
\ENDFOR
\IF{the master receives any fastest $K=\big\lceil\frac{6(M-G)+GM}{GL}+7T\big\rceil$ responses}
\STATE Decode the polynomial $h(x)$ \eqref{GLCC:answer fun} from the received responses by using RS decoding algorithm, and recover the gradient computations $\{\phi(\overline{\mathbf{X}}_m,\overline{\mathbf{y}}_m,\overline{\mathbf{w}}^{(t)}_m)\}_{m\in[M]}$ from \eqref{recover:result}.
\ENDIF
\STATE Master converts $\phi(\overline{\mathbf{X}}_m,\overline{\mathbf{y}}_m,\overline{\mathbf{w}}^{(t)}_m)$ from finite field to real domain and updates the model to obtain $\mathbf{w}^{(t+1)}_m$ according to \eqref{dequan:process} and \eqref{Gradient:updata} for all $m\in[M]$.
\ENDFOR
\RETURN Models $\mathbf{w}^{(J)}_1,\ldots,\mathbf{w}^{(J)}_M$.
\end{algorithmic}
\end{algorithm}

\subsection{Experiments}\label{expe:results}
In this subsection, we experimentally evaluate the performance of GLCC codes on image classification tasks, and compare them with those of the baseline LCC codes.\footnote{Notably in reference \cite{so2021codedprivateml}, a remarkable approach to distributed training of a logistic regression model is proposed using the idea of LCC codes. 
For a fair comparison, we just apply LCC codes to the single-layer perceptron network described in the previous subsection as a baseline.}


\noindent {\bf Datasets and models.} We focus on using the single-layer perceptron to simultaneously train $M=5$ binary classifiers on two benchmark image datasets: MNIST \cite{lecun1998mnist} and CIFAR-10 \cite{krizhevsky2009learning}. 
For the MNIST dataset, the $5$ binary classifiers are used to distinguish digits 0 and 1, 2 and 3, 4 and 5, 6 and 7, and 8 and 9, respectively. The training set for each classifier consists of $s=11200$ grayscale images of size $28\times28$.
For the CIFAR-10 dataset, we adopt an ImageNet pre-trained VGG19 model \cite{simonyan2014very} to extract a $512$-dimensional feature representation of each color image.
The $5$ binary classifiers on the CIFAR-10 dataset are trained to distinguish images with the labels of airplane and cat, frog and automobile, deer and ship, bird and horse, and dog and truck, respectively,
where the training set of each classifier contains $s=10000$ feature representations. 

\noindent {\bf Setup.}
Our experiments are performed on a single machine equipped with Intel Xeon Gold 5118 CPU @2.30GHz, simulating a distributed computing system of a master node and $N=50$ worker nodes. 
We also simulate the master-worker communication, assuming a total communication bandwidth of $200$Mbps between the master and all the workers. 
We provide a privacy guarantee of the dataset and models from any $T=1$ worker. All experiments are implemented in Python/PyTorch. 
Note that, in our experiments, the encoding and decoding operations at the master node can be achieved through polynomial interpolation and evaluation. Thus the encoding and decoding operations, as well as the gradient computations \eqref{Gra:computation} at the worker nodes, can all be regarded as matrix-matrix multiplications. They are all implemented using \textsl{torch.matmul()} command that is a PyTorch function used to perform matrix multiplication.

To emulate the straggler effect, we inject artificial delays at each training iteration using \textsl{time.sleep()} command. We consider two different scenarios of stragglers. 

\begin{itemize}
    \item \textbf{Straggler Scenario 1:} In the first scenario, we randomly delay the computation
time at each worker for a fixed duration \cite{LCC,GradientCodes1}. Specifically, each worker is chosen to be a straggler with a probability of $0.4$. The worker will wait for $0.05$ seconds before it sends its computation results to the master once it is selected as a straggler.
    \item \textbf{Straggler Scenario 2:} In the second scenario, we consider an exponentially distributed delay \cite{ozfatura2020straggler,wang2019erasurehead}. The length (in seconds) of the delay for each worker is drawn independently from an exponential distribution with parameter $\lambda=2$, i.e., the average delay at each worker is $1/\lambda=0.5$ seconds.
\end{itemize}

\begin{table*}[htbp!]
\centering
  \caption{Running time (seconds) of GLCC and LCC codes on MNIST for $11200$ iterations to achieve target average accuracy $acc = 97.38\%$; and on CIFAR-10 for $20000$ iterations to achieve target average accuracy $acc=90.18\%$.} \label{tab:time}
  \begin{tabular}{c|c|c|c|c|c|c}
\Xcline{1-7}{2pt}
 \multirow{3}{*}{Dataset} & \multirow{3}{*}{\shortstack{Straggler \\ Delay}} & \multirow{3}{*}{Coded Protocol} &  \multirow{3}{*}{\shortstack{Encoding and \\ Decoding}}  & \multirow{3}{*}{\shortstack{Upload and\\ Download}}  & \multirow{3}{*}{\shortstack{Computation \\ at worker}}   & \multirow{3}{*}{Total}  \\ 
  &  & &  &  &   &   \\ 
 &  & &  &  &   &   \\ \Xcline{1-7}{0.8pt}
\multirow{6}{*}{MNIST} & \multirow{3}{*}{\shortstack{Scenario 1}} & LCC codes & 29.23 & 68.69 & 545.27 & 643.19 \\ \Xcline{3-7}{0.4pt}
 & & GLCC with ($G=1,L=2$) & 43.42 & 170.33 & 37.46 & {\bf 251.21} \\ \Xcline{3-7}{0.4pt}
  & & GLCC with ($G=5,L=1$) & 44.17 & 310.58 & 85.72 & 440.47 \\ \Xcline{2-7}{1pt}
 & \multirow{3}{*}{\shortstack{Scenario 2}} & LCC codes & 29.23 & 60.46 & 6243.75 & 6333.44 \\ \Xcline{3-7}{0.4pt}
 & & GLCC with ($G=1,L=2$) & 43.42 & 120.91 & 3535.10 & 3699.43 \\ \Xcline{3-7}{0.4pt}
  & & GLCC with ($G=5,L=1$) & 44.17 & 297.54 & 1372.87 & {\bf 1714.58} \\ \Xcline{1-7}{0.8pt}
\multirow{6}{*}{CIFAR-10} & \multirow{3}{*}{\shortstack{Scenario 1}} & LCC codes & 40.20 & 89.11 & 964.93 & 1094.24 \\ \Xcline{3-7}{0.4pt}
 & & GLCC with ($G=1,L=2$) & 57.38 & 220.69 & 43.27 & {\bf 321.34} \\ \Xcline{3-7}{0.4pt}
  & & GLCC with ($G=5,L=1$) & 60.68 & 402.43 & 102.84 & 565.95 \\ \Xcline{2-7}{1pt}
 & \multirow{3}{*}{\shortstack{Scenario 2}} & LCC codes & 40.20 & 78.34 & 11143.94 & 11262.48 \\ \Xcline{3-7}{0.4pt}
 & & GLCC with ($G=1,L=2$) & 57.38 & 156.67 & 6288.80 & 6502.85 \\ \Xcline{3-7}{0.4pt}
  & & GLCC with ($G=5,L=1$) & 60.68 & 385.54 & 2401.32 & {\bf 2847.54} \\ \Xcline{1-7}{2pt}
  \end{tabular}
\end{table*}

\noindent {\bf Hyperparameters.} 
To avoid overflow error on a $64$-bit implementation, we carefully select the field size $q=2^{27}-39$ on MNIST and $q=2^{30}-35$ on CIFAR-10, and the quantization parameters to $(l_x,l_w)=(0,6)$ and $(0,7)$ for the MNIST and CIFAR-10 dataset respectively. 
We adopt mini-batch momentum SGD to update the models for each iteration, where the size of the local mini-batch samples is set to $100$ on  MNIST and $50$ on CIFAR-10 for each model.

\noindent\textbf{Performance Evaluations.} Recall that GLCC codes include LCC codes as a special case by setting $G=1$ and $L=1$. The values of $G$ and $L$ can be adjusted to optimize the total training time. In our experiments, we consider two parameter cases of ($G=1,L=2$) and ($G=5,L=1$). 
Table \ref{tab:time} presents the breakdowns of the total running times for GLCC codes and LCC codes, for the models to reach certain target average accuracies. 
Compared with LCC codes, GLCC codes tradeoff encoding/decoding and communication time to significantly reduce the worker computation time (dominated by the straggler effect) and the total running time, 
for all parameter cases and straggler scenarios. 
Specifically, in straggler scenario 1, GLCC codes with ($G=1,L=2$) achieve up to $2.5\times$ and $3.4\times$ speedup in the training time over LCC codes for MNIST and CIFAR-10, respectively.
In straggler scenario 2, GLCC codes with ($G=5,L=1$) achieve up to $3.6\times$ and $3.9\times$ speedup in the training time over LCC codes for MNIST and CIFAR-10, respectively.
The performance gain is because GLCC codes achieve lower recovery threshold and thus provide stronger robustness against straggler effect, by allowing the workers to compute and communicate multiple encoded data. 

\begin{figure}[htbp!]
\centering
	\subfigure[MNIST with straggler scenario 1]{\begin{minipage}{0.49\linewidth}
		\centering
		\includegraphics[width=1.0\linewidth]{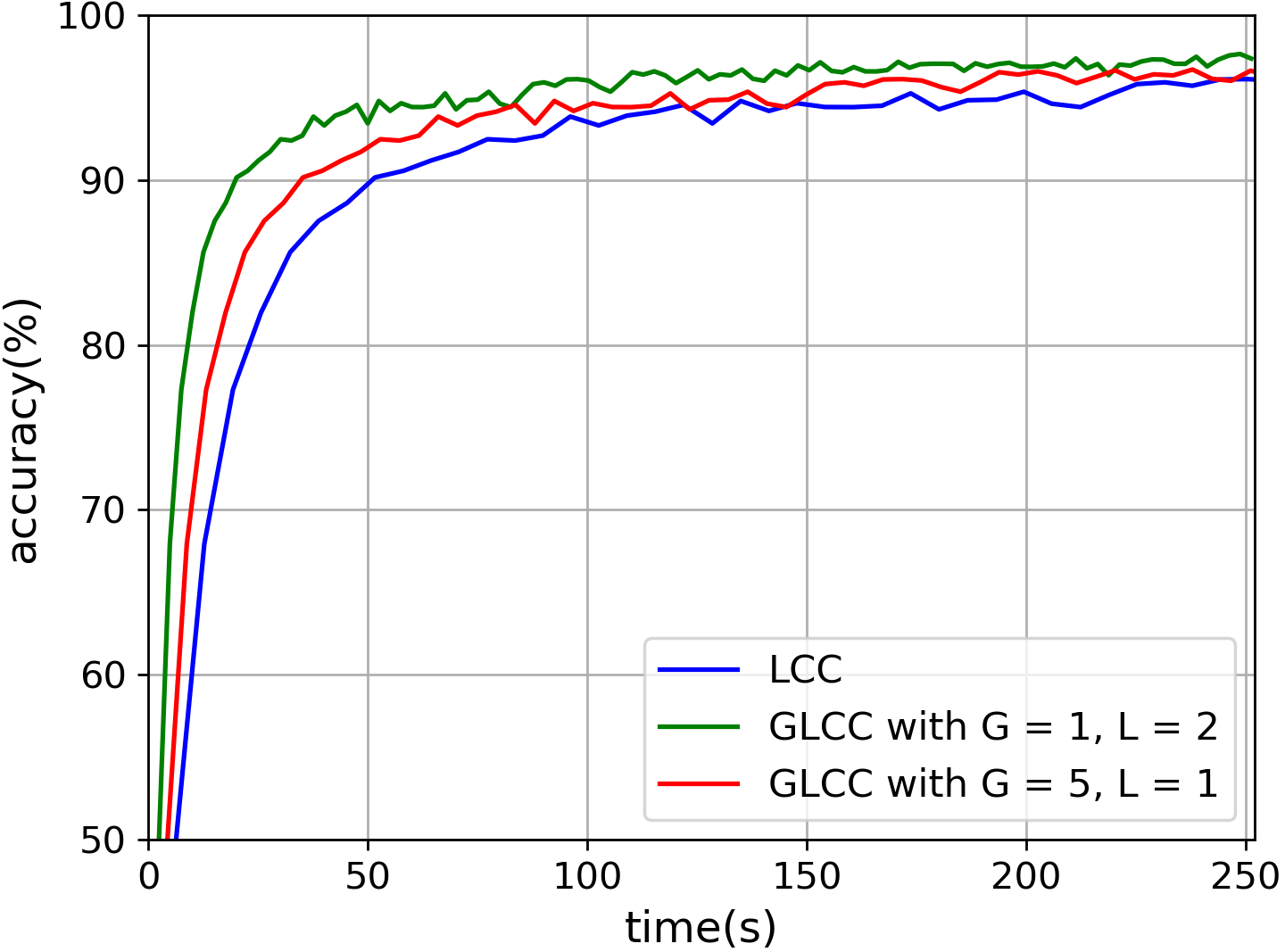}
		\label{Figure_a}
	\end{minipage}}
	\subfigure[CIFAR-10\! with\! straggler\! scenario\! 1]{\begin{minipage}{0.49\linewidth}
		\centering
		\includegraphics[width=1.0\linewidth]{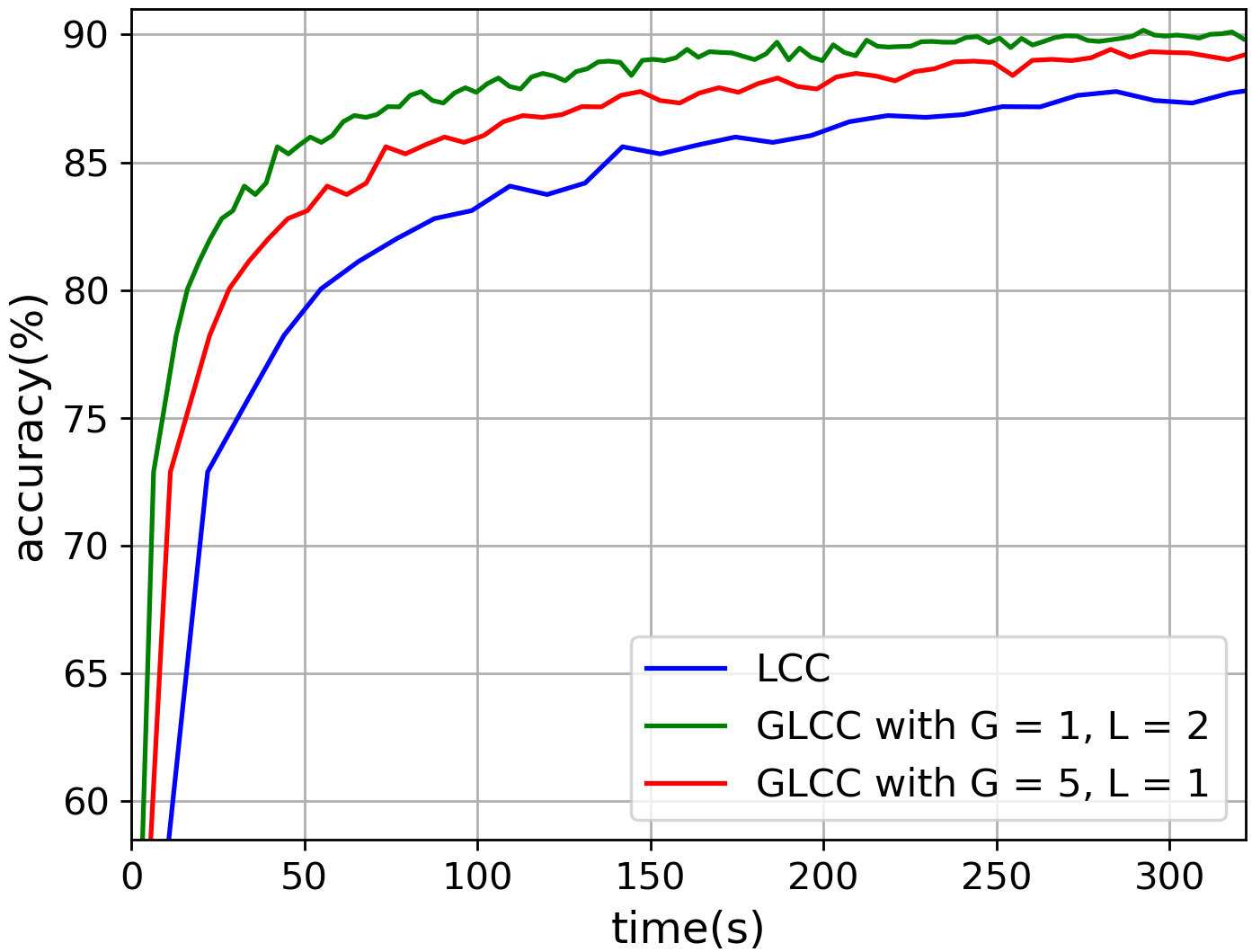}
		\label{Figure_b}
	\end{minipage}}
 
	\subfigure[MNIST with straggler scenario 2]{\begin{minipage}{0.49\linewidth}
		\centering
		\includegraphics[width=1.00\linewidth]{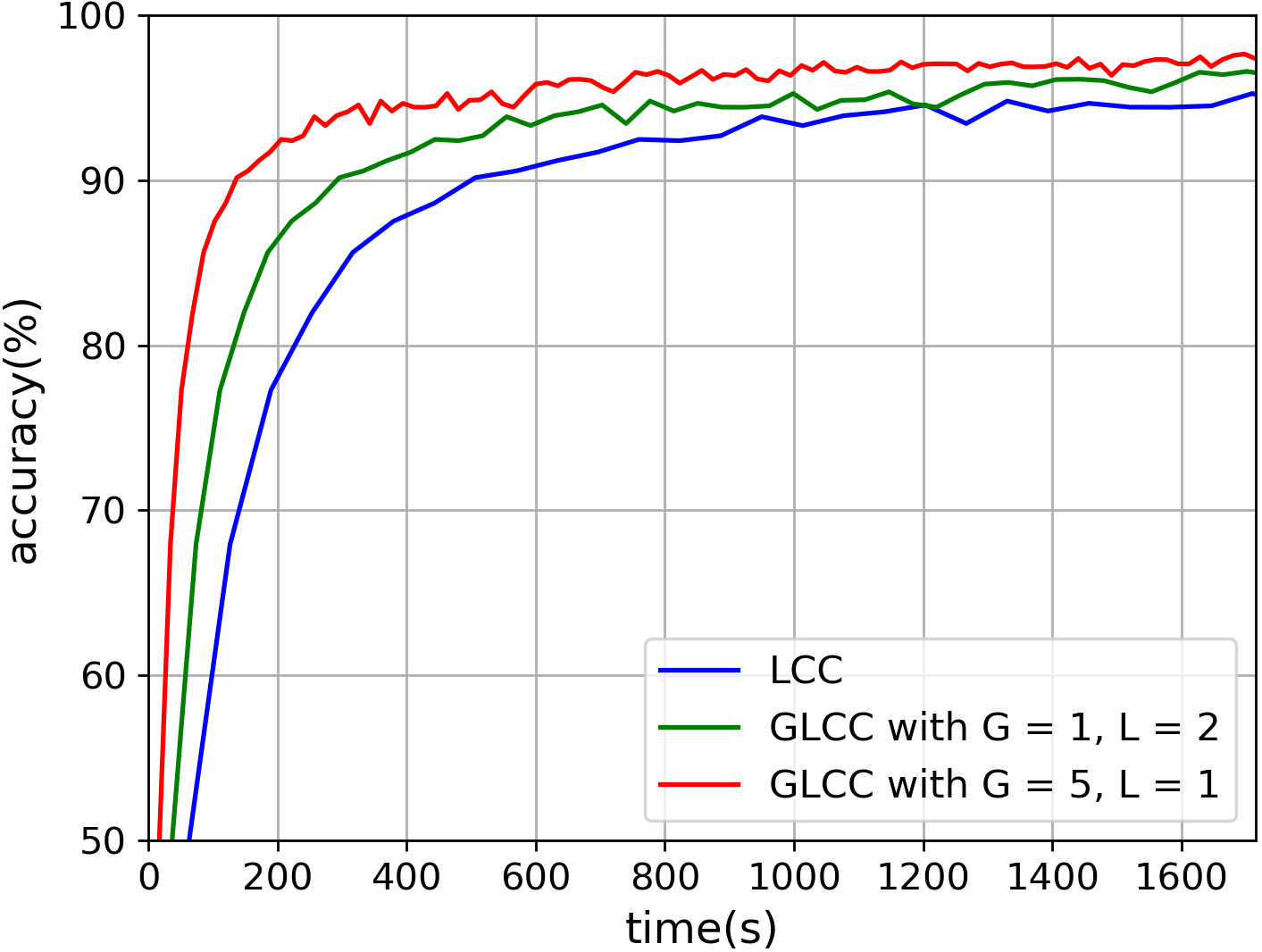}
		\label{Figure_c}
	\end{minipage}}
	\subfigure[CIFAR-10\! with\! straggler\! scenario\! 2]{\begin{minipage}{0.49\linewidth}
		\centering
		\includegraphics[width=1.0\linewidth]{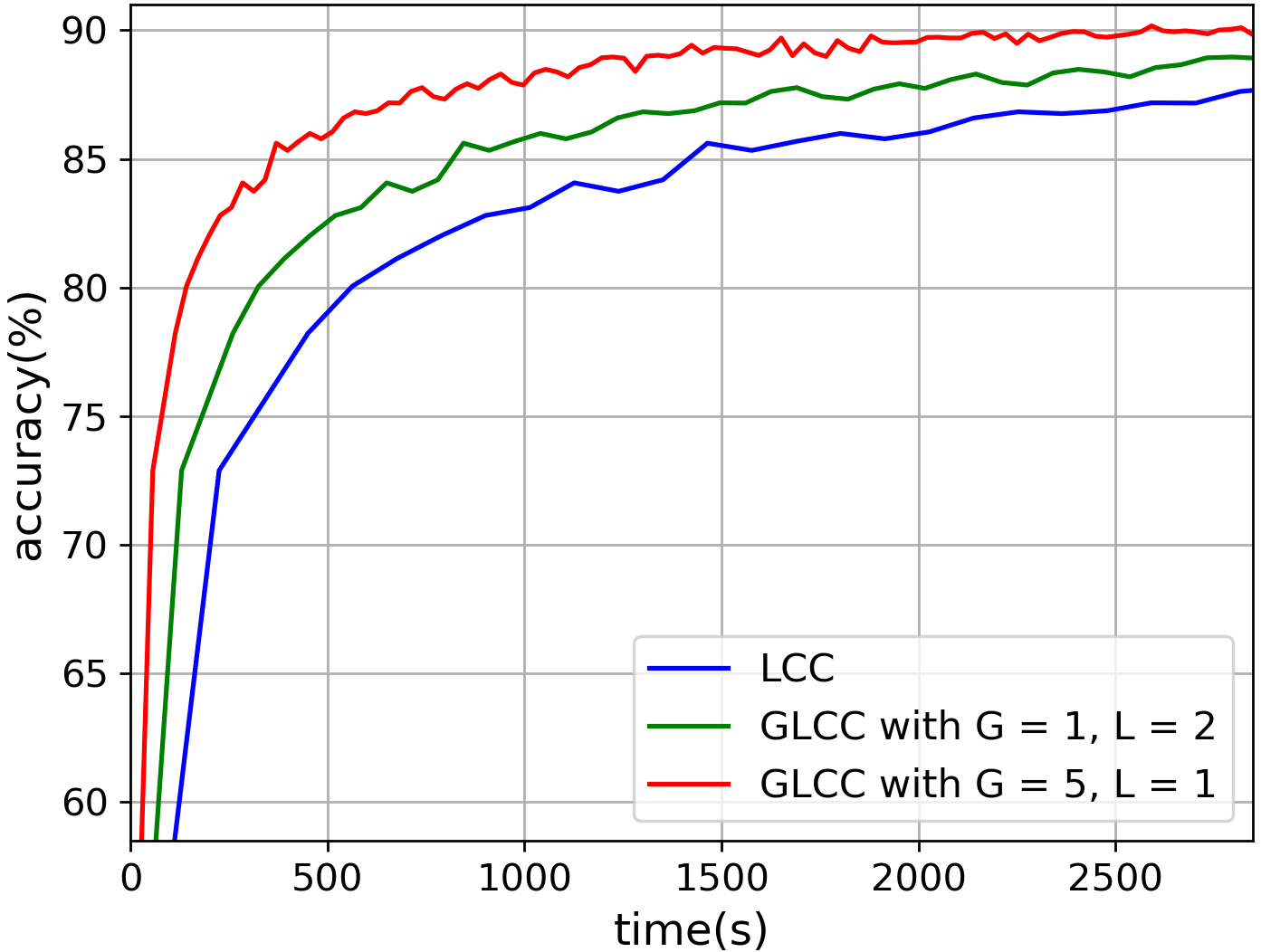}
		\label{Figure_d}
	\end{minipage}}
    \caption{Average test accuracy of GLCC codes and LCC codes to train $M=5$ binary classifiers on MNIST and CIFAR-10 over training time.
    The cut-off time of horizontal coordinate is selected as the convergence time of GLCC codes with ($G=1,L=2$) for subfigures (a)-(b) and GLCC codes with ($G=5,L=1$) for subfigures (c)-(d).
    }
    \label{Figure_convergence}
\end{figure}

Moreover, Fig. \ref{Figure_convergence} plots the average accuracy of GLCC codes and LCC codes along with running time. We observe that GLCC codes outperform LCC codes for both datasets and both straggler scenarios, at any time during the training process. In different straggler scenarios, the configuration of GLCC codes shall be optimized to maximize the training speed. Comparing the $(G=1,L=2)$ and $(G=5,L=1)$ GLCC codes considered in our experiment, the $(G=1,L=2)$ GLCC code has a smaller computation load of $1\times 2=2$ at each worker, but a larger recovery threshold of $22$. In straggler scenario 1 with a smaller straggler delay and a relatively mild straggler possibility, the $(G=1,L=2)$ GLCC is preferable due to its smaller worker computation load; for straggler scenario 2 with a larger straggler delay, the straggler effect becomes more severe and the $(G=5,L=1)$ GLCC code with a smaller recovery threshold of $12$ becomes the referred choice.

\begin{figure}[htbp!]
\centering
	\subfigure[MNIST with straggler scenario 1]{\begin{minipage}{0.49\linewidth}
		\centering
		\includegraphics[width=1.01\linewidth]{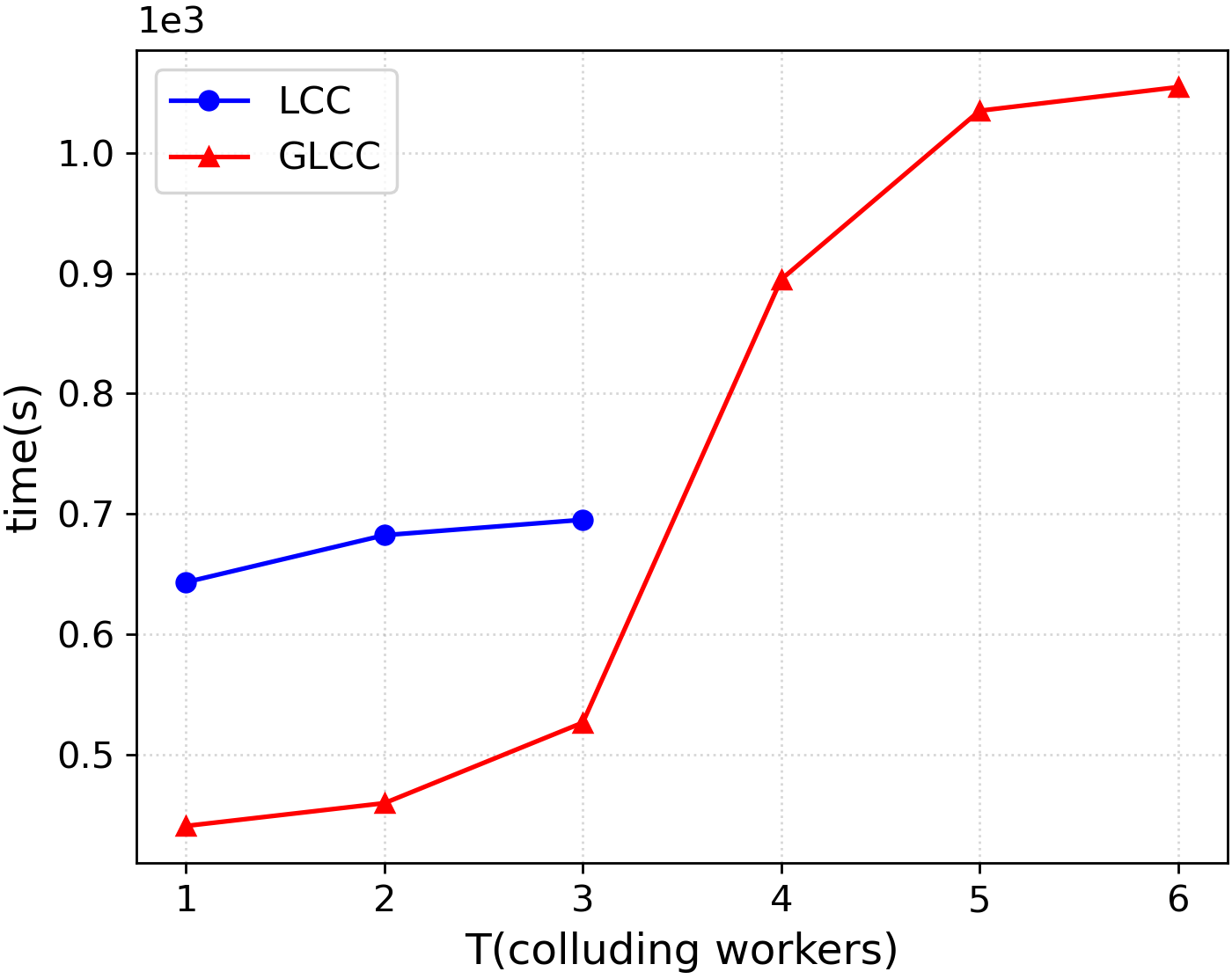}
		\label{Figure:privacy:a}
	\end{minipage}}
	\subfigure[CIFAR-10\! with\! straggler\! scenario\! 1]{\begin{minipage}{0.49\linewidth}
		\centering
		\includegraphics[width=1.01\linewidth]{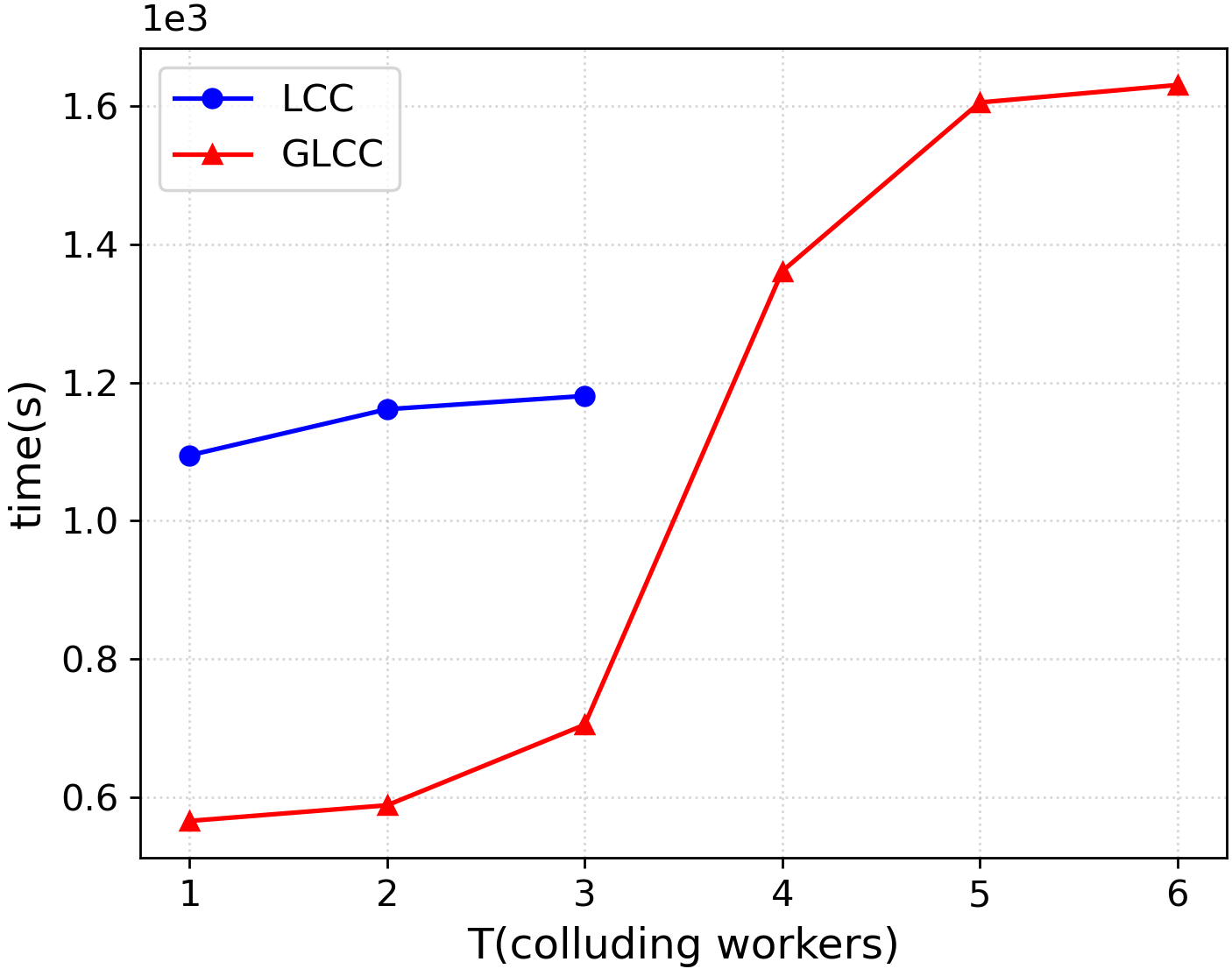}
		\label{Figure:privacy:b}
	\end{minipage}}
 
	\subfigure[MNIST with straggler scenario 2]{\begin{minipage}{0.49\linewidth}
		\centering
		\includegraphics[width=1.01\linewidth]{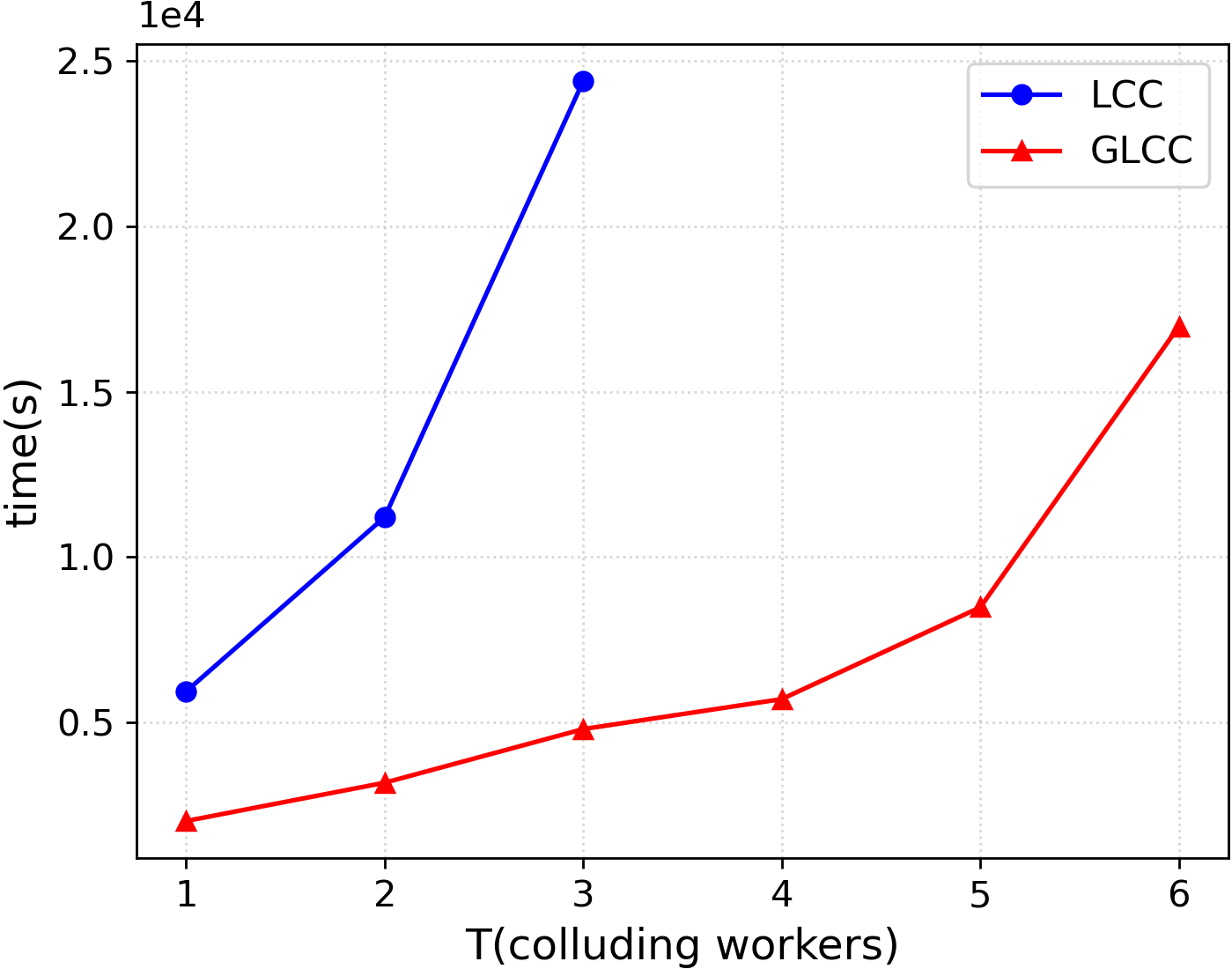}
		\label{Figure:privacy:c}
	\end{minipage}}
	\subfigure[CIFAR-10\! with\! straggler\! scenario\! 2]{\begin{minipage}{0.49\linewidth}
		\centering
		\includegraphics[width=1.01\linewidth]{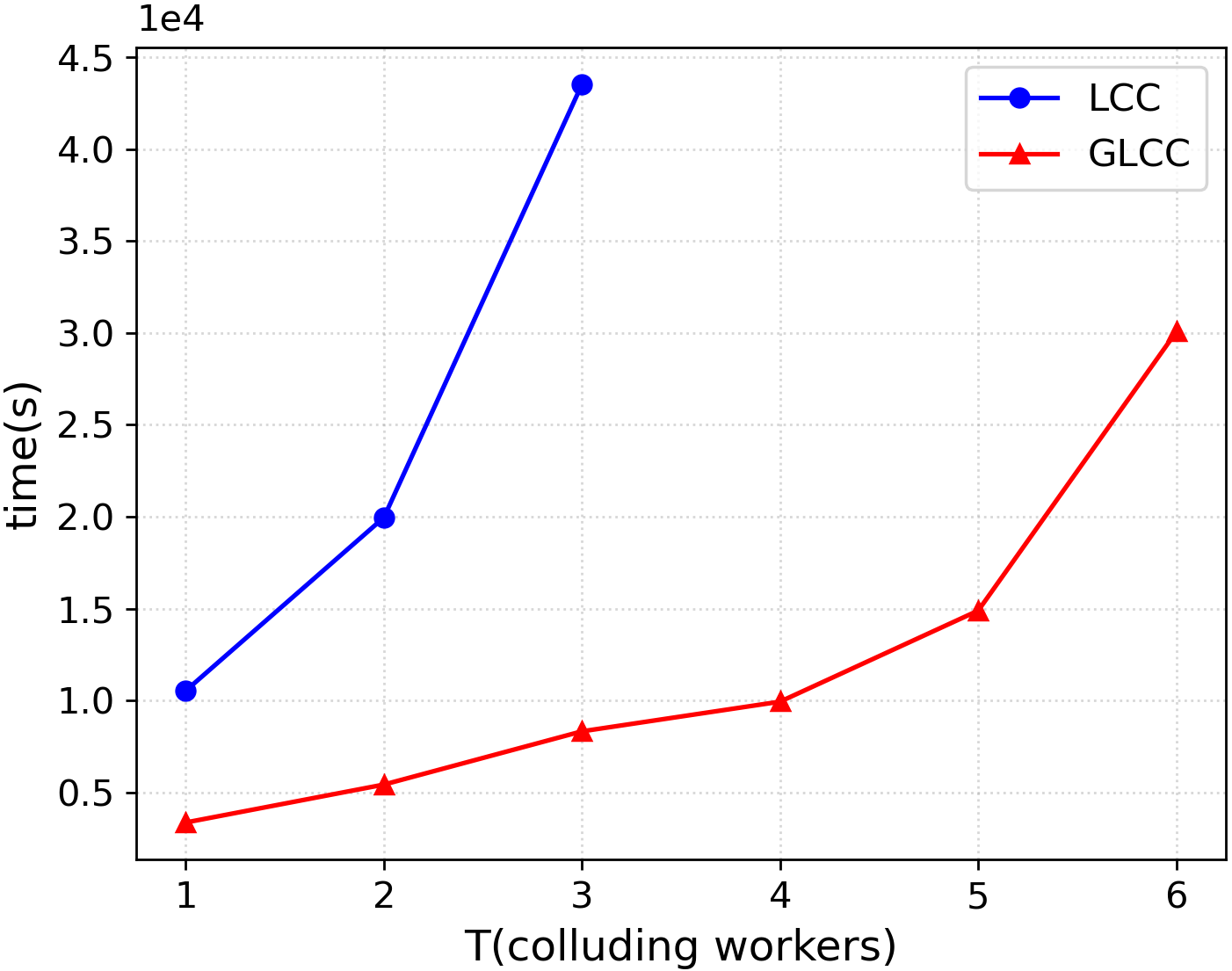}
		\label{Figure:privacy:d}
	\end{minipage}}
    \caption{Running time of GLCC codes with ($G=5,L=1$) and LCC codes with respect to the number of colluding workers $T$. Here $T$ takes the values from $1$ to the maximum value before the privacy guarantee of LCC/GLCC codes is violated.}
    \label{Figure:privacy}
\end{figure}

A key advantage of GLCC and LCC codes is providing privacy guarantees on the dataset and models. Fig. \ref{Figure:privacy} shows the running time of GLCC codes with ($G=5,L=1$) and LCC codes with respect to the number of colluding workers $T$. While the running times of both LCC and GLCC codes increase with $T$, GLCC constantly achieves a faster running time for the same number of colluding workers. Moreover, for a fixed number of $50$ workers, GLCC codes tolerate up to $T=6$ colluding workers against privacy leakage, whereas the LCC codes can tolerate up to $T=3$ colluding workers. This demonstrates the superiority of GLCC codes in privacy protection over LCC codes.

As a benchmark, we exclusively tested the centralized training time at the master node, where the master locally performs the task of training the $5$ binary classifiers on the two image datasets, and there are neither encoding/decoding operations, nor communication, and nor straggler effect. It took 43.25 seconds for the MNIST dataset and 5829.28 seconds for the CIFAR-10 dataset. In comparison to Table \ref{tab:time}, we can observe that centralized training is significantly faster than distributed training for the MNIST dataset, while distributed training yields a notable improvement over centralized training for the CIFAR-10 dataset.
For the small MNIST dataset, the increase in distributed training time is attributed to the fact that the computational gains from distributed parallel computing cannot fully offset the time extension caused by communication, encoding, decoding, and stragglers. However, as the dataset size grows, as observed  in the case of CIFAR-10, the parallel gains become more substantial, thereby reducing the centralized training time.

\begin{figure}[htbp!]
\centering
	\subfigure[MNIST]{\begin{minipage}{0.49\linewidth}
		\centering
		\includegraphics[width=1.0\linewidth]{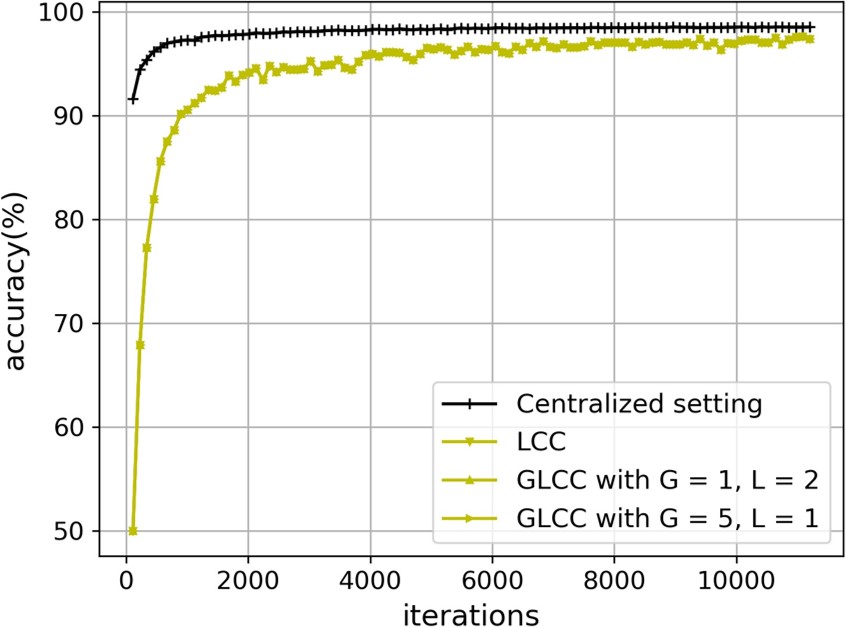}
		\label{Figure:accuracy:a}
	\end{minipage}}
	\subfigure[CIFAR-10]{\begin{minipage}{0.49\linewidth}
		\centering
		\includegraphics[width=1.0\linewidth]{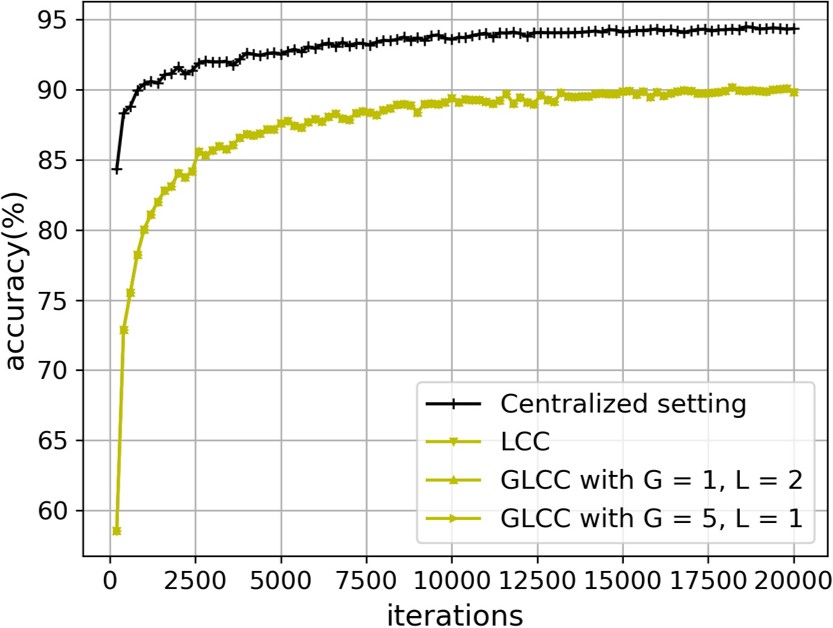}
		\label{Figure:accuracy:b}
	\end{minipage}}
    \caption{Comparison of the average test accuracy of GLCC codes, LCC codes, and centralized training across iterations.
}
    \label{Figure:accuracy}
\end{figure}

We also examine the accuracy of GLCC codes, LCC codes, and centralized training with increasing iteration rounds in Fig. \ref{Figure:accuracy}. We can obtain the following observations and results. 
\begin{itemize}
    \item GLCC codes and LCC codes achieve the same accuracy at the same number of iterations. This is because they both correctly complete the gradient computation for the sample batch. 
    \item Centralized training achieves an accuracy of $98.52\%$ and $94.36\%$ for MNIST and CIFAR-10, respectively. Compared to centralized training, GLCC/LCC codes experience a decrease of $1.14\%$ and $4.18\%$ in test accuracy for MNIST and CIFAR-10, respectively.    
    The main reason is the quantization losses incurred by the conversion from the real domain to the finite field. 
While increasing the values of the parameters $l_x$ and $l_w$ can reduce quantization losses, they also increase the risk of an overflow error incurred by wrap-around.
Similar to \cite{so2021codedprivateml,so2020scalable}, we perform the computational operations in a 64-bit implementation, such that the finite field size $q$ and the quantization parameters $l_x,l_w$ should not be chosen too large to avoid overflow errors, as indicated by Lemma \ref{size:field}. Consequently, GLCC/LCC codes result in accuracy losses, especially for large CIFAR-10 dataset.
\end{itemize}

Similar phenomena have also been observed in the case when applying GLCC/LCC codes to train more sophisticated machine learning models \cite{so2021codedprivateml,so2020scalable}, including Deep Neural Network (DNN) model, through polynomial approximations of nonlinear components such as the softmax and ReLU functions. To avoid overflow errors and reduce quantization losses, a larger size of the finite field is required. This requirement, in turn, will result in lower computation efficiency. Addressing this challenge poses an interesting avenue for future research. It is worth noting that if we do not consider providing data privacy guarantees, this problem is naturally resolved because there is no need for quantization operations.

\section{Conclusion}\label{conclusion}
We introduced GLCC codes for the problem of evaluating arbitrary multivariate polynomials in a distributed computing system, simultaneously achieving resiliency to straggling workers, security against adversarial workers, and privacy against colluding workers. Our proposed GLCC codes include the state-of-the-art LCC codes as a special case, and admit a more flexible tradeoff among recovery threshold, communication cost and computation complexity, which can be leveraged to optimize the computation latency of the system. 
We applied the GLCC codes into distributed training of single-layer perceptron neural networks, and experimentally demonstrated the substantial advantages of GLCC codes over baseline LCC codes, in speeding up the training process under multiple combinations of datasets and straggler scenarios.

\bibliographystyle{ieeetr}
\bibliography{reference.bib}

\end{document}